\newcommand{\argmax}{\mathop{\mathrm{arg\,max}}}
\DeclareMathOperator{\ld}{ld}
\DeclareMathOperator{\tr}{Tr}
\def\BibTeX{{\rm B\kern-.05em{\sc i\kern-.025em b}\kern-.08em
    T\kern-.1667em\lower.7ex\hbox{E}\kern-.125emX}}
\newtheorem{proposition}{Proposition}
\begin{document}

\begin{acronym}[QKD] 


\acro{QKD}  {Quantum Key Distribution}
\acro{cqq}  {classical-quantum-quantum}
\acro{E91}  {Ekert 1991}
\acro{CHSH} {Clauser–Horne–Shimony–Holt}
\acro{LOCC} {local operations and classical communications}
\acro{RFC}  {Request for Comments}

\end{acronym}


\title{Processing Entangled Links Into Secure Cryptographic Keys}
\author{Marcel Kokorsch and Guido Dietl, \IEEEmembership{Senior Member, IEEE}
\thanks{This work has been submitted to the IEEE for possible publication. Copyright may be transferred without notice, after which this version may no longer be accessible. \\Marcel Kokorsch and Guido Dietl are with the Professorship of Satellite Communication and Radar Systems, University of Würzburg, 97074 Würzburg, Germany  (e-mail: marcel.kokorsch@uni-wuerzburg.de; guido.dietl@uni-wuerzburg.de)}
}

\maketitle

\begin{abstract}
The following paper presents a holistic approach to the processing of entangled links within entanglement based quantum key distribution (QKD) protocols, whose security relies on the Bell inequality. We investigate the collective impact of the whole processing chain on the final secure key rate. This includes the quantum mechanical preprocessing in the form of entanglement distillation, processing of the entangled states via measurements, and the necessary classical postprocessing based on the measurement results. Our investigations are based on the principal idea of the Ekert 1991 protocol and utilize the secret key capacity introduced by Devetak and Winter in 2005. Our results include a proof on what measurement bases need to be chosen to achieve this capacity for the case of Werner states. We also propose a new processing strategy for how the two users of the QKD system can choose their measurement bases to achieve higher key generation rates for certain scenarios. Furthermore, we answer the question on how many recurrence based entanglement distillation iterations are optimal. This is achieved by proposing a unified formalism and describing the whole processing chain that can be used to make quantitative statements on the relation between the quality and quantity of entangled but noisy quantum states used for the generation of secure keys.
\end{abstract}

\begin{IEEEkeywords}
Quantum Communication, Quantum Key Distribution, Quantum Information Science, Quantum Entanglement, Quantum Cryptography
\end{IEEEkeywords}

\section{Introduction}
\IEEEPARstart{W}{ithin} the current second quantum revolution \cite{Dowling2003}, \ac{QKD} is one of the most promising applications of quantum communications. In this context, global communication networks are of significant interest. The current limitation of such \ac{QKD} systems is their limited distance \cite{Pirandola2017}, which makes entanglement based protocols, like, e.g., the \ac{E91} \cite{Ekert1991} protocol, especially interesting as they pose the possibility to integrate \ac{QKD} in potential future large scale entanglement-based quantum communication networks \cite{Kozlowski2023}. Furthermore, entanglement based protocols that draw their security on the violation of the Bell inequality \cite{Bell1964}, like the \ac{E91} protocol \cite{Ekert1991}, pose the great advantage of allowing security proofs whose security is device independent \cite{Pironio2009}. The main challenge in the implementation of these systems is the unstable nature of quantum systems and their interaction with the environment resulting in noise \cite{Cacciapuoti2020, Chehimi2022}. To solve this, one can find the idea of using decoherence-free subspaces \cite{Dorner2008} within the literature or, what seems to be more applicable, entanglement distillation \cite{Duer2007,Bennett1996,Deutsch1996} or quantum error correction \cite{Roffe2019}. According to a study on optimal architectures for long distance quantum communication \cite{Muralidharan2016} and the \ac{RFC} 9340 \cite{Kozlowski2023} on the architectural principles of a quantum internet (proposed by the Quantum Internet Research Group), the first, and partly the second, generation of quantum communication systems are supposed to use entanglement distillation to ensure error tolerance. This reason and the fact that entanglement distillation has already been demonstrated experimentally \cite{Pan2001}, highly motivates further research in the optimal application and implementation of these protocols.

The general idea of entanglement distillation is to take several noisy quantum states and to distill them into a fewer amount of less noisy quantum states, purely by \ac{LOCC}. Within the literature, there are many contributions investigating entanglement distillation, e.g., \cite{Duer2007,Rozpedek2018,Chen2024,Lo1999}. Some look into these protocols isolated \cite{Duer2007,Rozpedek2018}, while others investigate their integration into quantum communication systems with the specific objective to surpass a certain fidelity threshold, e.g., \cite{Chen2024}. The Lo-Chau protocol \cite{Lo1999} uses it also to prove the security of \ac{QKD} systems. However, with regard to the implementation of real-world, large-scale, entanglement-based \ac{QKD} schemes, whose overarching objective is the efficient generation of secure keys, there are still unresolved questions on how to optimally employ entanglement distillation schemes. 

To quantify the efficiency of the generation process of secure keys, information-theoretic investigations on the capacity of entangled links, such as the Devetak-Winter rate \cite{Devetak2005}, exist. A key challenge of these models is ensuring that the generated key remains secure against any potential attack by an eavesdropper. There are several works within the literature that deal with this by computing lower bounds on the achievable asymptotic key rate through the means of finding the worst case quantum state by semidefinite programming, based on the observed measurement distributions, e.g., \cite{Tan2021,Brown2021,Metger2023}. However, these approaches mainly focus on proving the security of the investigated \ac{QKD} protocols rather than investigating the efficient implementation of them. Furthermore, since they are based on the observed measurement distributions, they are not fully capable of capturing the influence of how the entangled states are processed into these distributions. 

An alternative technique to compute the secret key capacity directly from the density operator describing the distributed noisy quantum state has been proposed in \cite{Ettinger2025} and has already been used to investigate the influence of entanglement distillation beyond the increase of fidelity on the achievable secure key rate in \cite{Kokorsch2025}. However, several open questions remain regarding how entangled links should be processed via measurements in order to achieve the maximal secure key generation rate. In particular, the prior works \cite{Ettinger2025} and \cite{Kokorsch2025} did not account for the impact of sifting on the key rate. Also, it remains unclear which measurement strategies attain the capacity bounds of the Devetak–Winter rate across different classes of noisy states with respect to the chosen measurement bases. Moreover, the role of entanglement distillation in this context has also not been investigated in a general, protocol-independent framework that captures its interplay with these effects. Several of these questions are interlocked and can therefore not be answered individually.

Within this paper, we take a holistic approach to solve this problem. We will investigate the entire processing chain required to generate a secure key from a set of entangled, yet noisy, quantum states. This means, we consider the quantum mechanical preprocessing, i.e., entanglement distillation, the processing of the quantum states, i.e., their measurement with the goal of extracting correlated bit strings, and the necessary classical postprocessing to turn these correlated bit strings into usable secure keys. By doing so, we contribute towards a unified framework to identify the optimal implementation strategies for this class of \ac{QKD} protocols. 

This paper is structured as follows: Section \ref{CH:System_Model} introduces the system model and describes the rate with which a secure key can be generated. This involves the process of generating entangled links in Subsection \ref{CH:Entangled_Link_Generation}, the quantum mechanical preprocessing in the form of entanglement distillation in Subsection \ref{CH:r_ent}, the processing of these quantum states into classical correlations in Subsection \ref{CH:r_sift}, and the classical postprocessing in Subsection \ref{CH:r_key}. In Section \ref{CH:Optimal_key_rate}, we derive the optimal measurement bases for Werner states, required to attain the capacity of the Devetak-Winter rate. We use these results in Section \ref {CH:Processing} to investigate and compare different processing strategies. In Section \ref{CH:Optimal_k}, we describe a methodology on how to determine the optimal number of entanglement distillation iterations to be performed, with respect to maximizing the resulting secure key rate. We briefly discuss how the modification of Werner states into more general Bell diagonal states by entanglement distillation influences all our results in Section \ref{CH:Post_Distillation}, and in Section \ref{CH:Conclusion}, we finally conclude the paper.

Within this work, we will be using the following notation. For the three parties Alice (A), Bob (B), and Eve (E), we denote their corresponding subsystems with capital letters A, B, and E, respectively. The set of all density operators, representing quantum states within the Hilbert space $\cal{H}$, will be denoted as $\cal{D}(\cal{H})$. Composite Hilbert spaces will be abbreviated in the form $\mathcal{H}_A \otimes \mathcal{H}_B = \mathcal{H}_{AB}$. For the specific measurement results obtained by projection valued measurement using orthonormal bases, we use small Greek letters with superscripts, e.g., $\alpha$ and $\alpha^\perp$ for A. With small Latin letters we denote the unspecific measurement results which may correspond to either of the two potential outcomes, e.g., $a\in \{ \alpha,\alpha^\perp\}$. If two quantum states differ but have the same fidelity, we refer to them as having different shapes. The symbol $I_n$ denotes the $n$-dimensional identity operator and $\Pi_n = \frac{1}{n} I_n$ is the density operator of the $n$-dimensional maximally mixed state. For an easier notation, we write the complex conjugated inside the bras and kets, i.e., $\ket{\psi^*} = \ket{\psi}^*$. The eigenvalue spectrum of an operator $O$ is denoted by $\sigma[O]$. $\tr[O]$ denotes the trace of the operator $O$ and $\tr_X[O]$ is the partial trace with respect to the subsystem $X$. Lastly, we define $\mathrm{ld}(\square)$ to be the logarithm to the base two.

\section{System Model and Problem Formulation}\label{CH:System_Model}

Within this work, we want to investigate how an entangled link between two parties A and B can be processed into a secure key,  with a potential eavesdropper E attacking the system. To do so, we will not only investigate the processing of quantum states into classical information, but also the influence of quantum mechanical preprocessing as well as classical postprocessing. We base our analysis on the idea of the entanglement based \ac{QKD} protocol \ac{E91} \cite{Ekert1991}. Thus, we define the main key performance criteria of the system to be the achievable secure key generation rate given by
\begin{equation}\label{eq:r}
    r = \frac{\ell}{M},
\end{equation}  
where $\ell$ is the number of secure key bits that can be generated in relation to the number $M$ of successfully distributed quantum states between A and B. 

This process can further be divided into four steps:
\begin{enumerate}
    \item An entangled link between A and B is generated, resulting in $M \in \mathbb{N}$ initially distributed, potentially noisy, bipartite quantum states $\varrho_{AB}^0$ shared by A and B. 
    \item The total amount of $M$ initially distributed states $\varrho_{AB}^0$ can be quantum mechanically preprocessed by  means of $k \in \mathbb N_0$ iterations of entanglement distillation, resulting in $N \leq M$ quantum states $\varrho_{AB}^k$. 
    \item A and B process their $N$ quantum states $\varrho_{AB}^k$ by first performing projection valued measurements, resulting in $N$ classical measurement results. Next, they perform sifting, reducing the amount of measurements that can be used to generate the secure key to $n < N$. The remaining $N-n$ measurements are either used for proving the security of the protocol or are unavoidably wasted measurements.    
    \item Lastly, classical postprocessing is performed to distill a usable secure key, which reduces the $n$ measurement results to $\ell \leq n$ bits of a secure key. 
\end{enumerate}

The quantities denoting the amount of quantum states and classical bits $M, N, n,$ and $\ell$ will be natural numbers in any real system. However, in the following analysis we assume the asymptotic regime $M \to \infty$, as this allows us to work consistently with information-theoretic quantities defined in the same limit and avoid complications arising from finite-size effects such as rounding. This means that the process of the quantum mechanical postprocessing will be described by the entanglement distillation rate $r_\mathrm{ent} = \frac{N}{M} \in [0,1]$. Similar to that, we use $r_\mathrm{sift} = \frac{n}{N} \in [0,1)$ to describe the sifting rate of Step 3. Lastly, the efficiency of classical postprocessing will be denoted by the key distillation rate $r_\mathrm{key} = \frac{\ell}{n}$ which we will model using information-theoretic concepts, that also allow for negative values of $\ell$ resulting in $r_\mathrm{key}\leq 1$ and $r_\mathrm{key} \in \mathbb{R}$. A negative value for $\ell$ and therefore for $r_\mathrm{key}$ in this context can be interpreted such that no secure key can be generated at all. Finally, taking all this into account, we rewrite our original performance metric of (\ref{eq:r}) into 
\begin{equation}
    r = r_\mathrm{ent} \cdot r_\mathrm{sift} \cdot r_\mathrm{key},\ r \in \mathbb{R} \ \mathrm{with}\ r < 1. 
\end{equation}
In the following subsections, we present details on how we model each of these four steps of the \ac{QKD} protocol and what open questions remain with regard to implementing them to achieve the highest key rate $r$ as possible.

\subsection{Entangled Link Generation}\label{CH:Entangled_Link_Generation}
The first step before the quantum states can be processed is the distribution of such states between the end nodes A and B. To do so, we begin with an arbitrary party that locally creates a bipartite entangled quantum system. We assume, like many others within the literature, the Bell pair 
\begin{equation}\label{eq:Phi^+}
    \ket{\Phi^+} = \frac{1}{\sqrt{2}}(\ket{00}+\ket{11})\ \in \mathcal{H}_{AB},
\end{equation}
where $\mathcal{H}_{AB}$ is the joint Hilbert space of A and B. In the next step, this Bell state needs to be transmitted over a quantum channel to the receivers A and B. In reality this distribution unavoidably takes place over a noisy quantum channel and also, due to the fact that the quantum channel is public, enables Eve to interact with the quantum system, i.e., to attack the transmission. We denote the received noisy, potentially attacked and most likely mixed state at A and B as $\varrho_{AB}^0$. The legitimate receivers A and B have no way to determine which of the perturbations of the received quantum system $\varrho_{AB}^0$ (compared to the perfect Bell state $\ket{\Phi^+}$) can be accounted to interactions with the environment in the form of noise and which to the potential attack of E. Therefore, we assume the worst case, i.e., that all deviations are caused by E interacting with the system. This means that after the distribution, there exists a pure quantum state $\varrho_{ABE}$ that is a purification of $\varrho_{AB}^0$ such that $\tr_E[{\varrho_{ABE}}] = \varrho_{AB}^0$.

We assume the initially distributed state $\varrho_{AB}^0$ between A and B to be in the form of a Werner state \cite{Werner1989}
\begin{equation}\label{eq:Werner_State}
    \varrho_{AB}^0 = \frac{4F-1}{3}\ket{\Phi^+}\bra{\Phi^+} + \frac{1-F}{3} I_4,
\end{equation}
which can be fully described by its fidelity $F$. We always refer to the fidelity with respect to the desired Bell state $\ket{\Phi^+}$, i.e., for any density operator $\varrho \in \mathcal{D}(\mathcal{H}_{AB})$ its fidelity is given by 
\begin{equation}\label{eq:F}
    F(\varrho) = \bra{\Phi^+}\varrho\ket{\Phi^+}.
\end{equation}
In the following, we denote the space of all Werner states acting upon the Hilbert space $\mathcal{H}_{AB}$ as $\mathcal{D}_\mathrm{W}(\mathcal{H}_{AB}) \subset\mathcal{D}(\mathcal{H}_{AB})$. This modelling is accurate if the quantum channel (including the potential attack by E) is identical to that of a depolarizing channel (which is often assumed within the literature) because in this case the output is a Werner state. However, this modelling is also justified for the general case due to Werner twirling \cite{Bennett1996b}, which enables the transformation of any arbitrary quantum state to a Werner state as given in (\ref{eq:Werner_State}), without changing the state's fidelity. If we now assume that the goal of the two legitimate users A and B is opposed to that of E, then we argue that at least one of the parties applies Werner twirling, resulting in a Werner state as given within (\ref{eq:Werner_State}).

\subsection{Preprocessing via Entanglement Distillation}\label{CH:r_ent}
Entanglement distillation protocols are a class of protocols that only rely on \ac{LOCC} to transform a set of $M$ quantum states, of a lower quality, into a smaller set of $N < M$ states with a higher quality \cite{Duer2007}. Usually the fidelity $F$ of (\ref{eq:F}) is used to describe the quality, but not exclusively \cite{Kokorsch2025}. Within the literature, one can find several entanglement distillation protocols, e.g., \cite{Bennett1996}, \cite{Deutsch1996}. Furthermore, these protocols can be executed in different configurations \cite{Duer2007}, e.g., recurrence schemes, pumping schemes, or combinations of both usually called nested pumping schemes. Within the following work, we confine our investigations on recurrence schemes. This is a class of iterative protocols, that can be applied for a total of $k \in \mathbb N_0$ iterations, where each of the $d \in \{1,2,\dots k\}$ iterations consumes two identical input states $\varrho_{AB}^{d-1}$, to produce an output state $\varrho_{AB}^{d}$ with a success probability $p_\mathrm{ent}\left(\varrho_{AB}^{d-1}\right) \in [0,1]$. Based on this, the resulting average entanglement distillation rate is given by \cite{Kokorsch2024}
\begin{equation}\label{eq:r_ent(k)}
    r_\mathrm{ent}\left(\varrho_{AB}^0, k\right)= \prod_{d=1}^k \frac{p_\mathrm{ent}\left(\varrho_{AB}^{d-1}\right)}{2}.
\end{equation}
If no entanglement distillation is performed, i.e., $k=0$, no resources are used for this step and the resulting entanglement distillation rate is therefore defined to be $r_\mathrm{ent}(\varrho_{AB}^0,0) = 1\ \forall \varrho_{AB}^0 \in \mathcal{D}(\mathcal{H}_{AB})$. The specific values for the success probability $p_\mathrm{ent}\left(\varrho_{AB}^{d-1}\right)$ and the relationship between the input states $\varrho_{AB}^{d-1}$ and the output states $\varrho_{AB}^{d}$ depend both on the chosen entanglement distillation protocol. For the common entanglement distillation protocols proposed in \cite{Bennett1996} and \cite{Deutsch1996}, the post distillation states $\varrho_{AB}^k$ are in the form of Bell diagonal states
\begin{equation}\label{eq:Bell_Diagonal}
    \begin{split}
    \varrho_{AB}^k = F&\ket{\Phi^+}\bra{\Phi^+} + \delta \ket{\Psi^-}\bra{\Psi^-} \\ 
    + \epsilon &\ket{\Psi^+}\bra{\Psi^+} + \tau \ket{\Phi^-}\bra{\Phi^-}.
    \end{split}
\end{equation}
Here $F$ is again the fidelity of the state, like within (\ref{eq:Werner_State}), and the coefficients $\delta,\epsilon,\tau \in [0,1]$ have the property that $F+\delta+\epsilon+\tau=1$. The output fidelity is monotonically increasing, i.e., $F(\varrho_{AB}^{d}) > F(\varrho_{AB}^{d-1})$, for as long as the input state $\varrho_{AB}^{d-1}$ of an iteration fulfills the criteria of being a distillable state \cite{Duer2007}. We denote the space of all density operators acting on the Hilbert space $\mathcal{H}_{AB}$ within the form of a Bell diagonal state as defined within (\ref{eq:Bell_Diagonal}) as $\mathcal{D}_\mathrm{B}(\mathcal{H}_{AB}) \subset \mathcal{D}(\mathcal{H}_{AB})$. In \cite{Kokorsch2025}, it was shown that, in addition to an increased fidelity, the structural changes in the state caused by entanglement distillation also have a positive influence on the quality of the state. But, since any bipartite state can be transformed into a Werner state via Werner twirling \cite{Bennett1996b}, we restrict our initial analysis to states of the form given in (\ref{eq:Werner_State}). This allows for a more general treatment, as any state can be mapped to this form prior to processing. Moreover, Werner states are fully characterized by a single parameter, their fidelity, which makes them a particularly convenient starting point for our investigation. In Section \ref{CH:Post_Distillation}, we then further investigate how the more general Bell diagonal form affects our analysis. 

Within the literature, it is well investigated how to optimally perform entanglement distillation with the asymptotic goal of generating pure Bell states \cite{Duer2007}, for which the entanglement distillation rate $r_\mathrm{ent}$ of (\ref{eq:r_ent(k)}) goes to $0$. As this is not desirable within the context of practical applications, one common strategy within the literature is to rely on performing entanglement distillation until a certain threshold fidelity is reached, e.g., \cite{Chen2024}, but what this threshold fidelity is supposed to be is often not further discussed. Based on this, we state our first research question \emph{RQ1: How much preprocessing, i.e., how many iterations $k$ of entanglement distillation are optimal with respect to maximizing the resulting final key rate?} But before we can investigate the optimal amount of distillation iterations $k_\mathrm{opt}$ in Section \ref{CH:Optimal_k}, we first need to quantify how the quality of the entangled pairs influences the processing of them. 

\subsection{Entangled State Processing}\label{CH:r_sift}

After entanglement distillation is finished, there remain $N$ copies of the shared entangled state $\varrho_{AB}^k$ between A and B, which are further processed by projection valued measurements. To do so, A measures the observable $\hat{A}(\mathcal{B}_A) = \alpha \ket{\alpha}\bra{\alpha} +\alpha^\perp \ket{\alpha^\perp}\bra{\alpha^\perp}$ on her subsystem, and B the observable $\hat{B}(\mathcal{B}_B) = \beta \ket{\beta}\bra{\beta} +\beta^\perp \ket{\beta^\perp}\bra{\beta^\perp}$ on his subsystem. As we will show in the following subsection, the achievable secure key generation rate only depends on the orthonormal measurement bases $\mathcal{B}_A = \{\ket{\alpha}, \ket{\alpha^\perp}\}$, with $\ket{\alpha},\ket{\alpha^\perp}\in \mathcal{H}_A$ and $\mathcal{B}_B = \{\ket{\beta}, \ket{\beta^\perp}\}$, with $\ket{\beta},\ket{\beta^\perp}\in \mathcal{H}_B$. We therefore do not have to map the measurement results to certain values, and only use them as references to the respective post measurement states. Thus, we define our measurements only based on their associated measurement bases for the remainder of this work. 

The goal of any entanglement based \ac{QKD} protocol is to perform these measurements in such a way that a secure key can be generated between A and B. For this, two generally different kinds of measurements need to be performed. While some measurements are needed to verify the security of the protocol, which is usually done by performing a Bell test based on the \ac{CHSH} inequality \cite{Clauser1969}, other measurements have the goal of extracting classical correlations out of the quantum states that can be used to distill the secure key. To achieve this, A and B each measure randomly in one out of three different measurement bases, which we call $\mathcal{B}_{Ai}$ with $i\in\{1,2,3\}$ for A and respectively $\mathcal{B}_{Bj}$ with $j\in\{1,2,3\}$ for B. 

By doing so, there are nine different combinations in which A and B can collectively measure their shared state $\varrho_{AB}^k$. We denote the set of all these combinations as $\{(\mathcal{B}_{Ai},\mathcal{B}_{Bj})\}_{(i,j) \in \mathcal{S}_\mathrm{meas}}$ with $\mathcal{S}_\mathrm{meas} = \{(i,j)\ | i,j\in\{1,2,3\}\}$. The idea of the protocol is to use the subset $\{(\mathcal{B}_{Ai},\mathcal{B}_{Bj})\}_{(i,j) \in \mathcal{S}_\mathrm{CHSH}}$ with $\mathcal{S}_\mathrm{CHSH} = \{(i,j)\ |i,j\in\{1,2\}\}$ to perform a Bell test by estimating the \ac{CHSH} value, and the remaining five measurements, denoted by the set $\{(\mathcal{B}_{Ai},\mathcal{B}_{Bj})\}_{(i,j) \in \mathcal{S}_\mathrm{proc}}$ with $\mathcal{S}_\mathrm{proc} = \mathcal{S}_\mathrm{meas} \setminus \mathcal{S}_\mathrm{CHSH}$, i.e., all those combinations where at least either A chooses the basis $\mathcal{B}_{A3}$ or B chooses the basis $\mathcal{B}_{B3}$, are used to extract correlations out of the shared quantum state $\varrho_{AB}^k$ that are required for the distillation of the secure key. 

We define the probability that X chooses the measurement basis $\mathcal{B}_{X} = \mathcal{B}_{Xi}$ as $P_{Xi}$ with $X\in\{A,B\}$ and $i\in \{1,2,3\}$. Based on this, we define the joint probability that A chooses $\mathcal{B}_{A} = \mathcal{B}_{Ai}$ and that B chooses $\mathcal{B}_{B} = \mathcal{B}_{Bj}$ as $P_{Ai,Bj} = P_{Ai} P_{Bj}$ with $(i,j)\in \mathcal{S}_\mathrm{meas}$. We assume that the probability for a measurement that can be used for the \ac{CHSH} estimation shall be equal to $\eta \in\ (0,0.25)$ for all the four existing cases and will be dictated by the implementation of the protocol, i.e., $P_{Ai,Bj} = \eta \ \forall(i,j)\in\mathcal{S}_\mathrm{CHSH}$. Even if $\eta$ can be assumed to be fixed, based on the specific implementation, there is still one degree of freedom within choosing how to set the probabilities for the different measurement bases, that we will call $\gamma \in [2\eta,0.5]$. Table \ref{tab:E91_prob} shows the resulting probabilities for the choices of measurement bases for A and B as well as the resulting joint probabilities for the resulting joint measurements. 

\begin{table}[t]
\centering
\caption{Distribution of Measurement Bases for the Entangled Link Processing}
\label{tab:E91_prob}
\renewcommand{\arraystretch}{2}
\setlength{\tabcolsep}{4pt}
\begin{tabular}{c!{\vrule width 1.5pt}c c c}
\diagbox{$\mathcal{B}_B$ for Bob}{$\mathcal{B}_A$ for Alice} & $P_{A1} =\frac{\eta}{\gamma}$ & $P_{A2} =\frac{\eta}{\gamma}$ & $P_{A3} =1-2\frac{\eta}{\gamma}$ \\
\noalign{\hrule height 1.5pt}
$P_{B1} =\gamma$ & $\eta$ & $\eta$ & $\gamma-2\eta$\\ 
$P_{B2} =\gamma$ & $\eta$ & $\eta$ & $\gamma-2\eta$ \\
$P_{B3} =1-2\gamma$ & $\frac{\eta}{\gamma}-2\eta$ & $\frac{\eta}{\gamma}-2\eta$ & $\frac{(2\gamma -1)(2\eta-\gamma)}{\gamma}$ \\
\end{tabular}
\end{table}

This means that there will be five different combinations of chosen measurement bases by A and B whose measurement results can potentially be used to distill the secure key. As it is not guaranteed that all of these five cases will be equally suitable for extracting the correlations necessary for generating a secure key, they need to be investigated and treated separately. The sifting rate for each of these individual combinations is equal to the probability that the corresponding pair of measurement bases get chosen by A and B. We combine the total influence of the key sifting and key distillation process into the processing rate $r_\mathrm{proc} = \frac{\ell}{N} \in [0,1)$. As the key distillation rate behaves linear with respect to the secure key generation rate (see Section \ref{CH:System_Model}) this processing rate is given by the sum of the sifting rate times the key distillation rate of each of these five combinations. This also follows directly from the fact, that we can independently distill keys for each of the useful combinations of chosen measurement bases by A and B, and then add those individual keys up to one final key. Based on this, the processing rate can be given by 
\begin{multline}\label{eq:r_proc}
    r_\mathrm{proc}(\varrho_{AB}^k,\mathcal{B}_{A3},\mathcal{B}_{B3},\eta,\gamma) = \\
    \sum_{(i,j)\in\mathcal{S}_\mathrm{proc}} P_{Ai,Bj} \cdot \max\{0,r_\mathrm{key}(\varrho_{AB}^k,\mathcal{B}_{Ai},\mathcal{B}_{Bj})\}.
\end{multline}
Here, the key distillation rate $r_\mathrm{key}(\varrho_{AB}^k,\mathcal{B}_{Ai},\mathcal{B}_{Bj})$ describes the rate of the postprocessing being performed on the measurement results obtained when measuring $\varrho_{AB}^k$ within the bases $\mathcal{B}_{Ai}$ and $\mathcal{B}_{Bj}$, and will be described in more detail within Subsection \ref{CH:r_key}, see (\ref{eq:Devetak_Winter}). As already mentioned within Section \ref{CH:System_Model}, the key distillation rate is based on quantum information-theoretic models, allowing the rate to become negative, indicating that no secure key can be generated. In these cases, it is more efficient to simply discard these measurements, which is modelled within (\ref{eq:r_proc}) by cutting off the key rate at $0$. While the choice of the measurement bases used for estimating the \ac{CHSH} value, i.e., $\mathcal{B}_{Ai}$ and $\mathcal{B}_{Bj}$ for $(i,j) \in \mathcal{S}_\mathrm{CHSH}$, and the probability $\eta$ are dictated by the \ac{CHSH} inequality and its necessity for verifying the security of the protocol, the bases $\mathcal{B}_{A3}, \mathcal{B}_{B3}$, and $\gamma$ can be chosen freely. Thus, different processing strategies are possible and depend on the collective choice of $\mathcal{B}_{A3}, \mathcal{B}_{B3}$, and $\gamma$. Based on this, our second research question is \emph{RQ2: What processing strategy utilizes the given entangled links $\varrho_{AB}^k$ best, with respect to maximizing the resulting final key rate?} But before we can investigate different processing strategies, we need to further understand how the choice of the measurement bases influences the key distillation rate. To do so, we quantify the rate $r_\mathrm{key}(\varrho_{AB}^k,\mathcal{B}_{Ai},\mathcal{B}_{Bj})$, with which we can distill a secure key out of the available measurement results as shown in the next subsection.

\subsection{Postprocessing Based Key Distillation Capacity}\label{CH:r_key}

After $N$ states of $\varrho_{AB}^k$ are measured, there are $(1-4\eta)N$ measurement results that are not required for the estimation of the \ac{CHSH} value and, thus, can be used to distill the secure key. Since in any realistic system, the state $\varrho_{AB}^k \neq \ket{\Phi^+}\bra{\Phi^+}$, classical postprocessing is necessary. This process includes information reconciliation \cite{Brassard1994}, a protocol that fixes discrepancies in the bit strings obtained by A and B without revealing the whole strings and privacy amplification \cite{Bennett1995} to guarantee that E has no information of the resulting secure key. The efficiency of these processes depends on the measurement distributions, which are given by the measured quantum state $\varrho_{AB}^k$ as well as the chosen measurement bases of Alice $\mathcal{B}_A$ and Bob $\mathcal{B}_B$. Within the context of the key distillation rate $r_\mathrm{key}$, we will call the measured state $\varrho_{AB}^k$ simply $\varrho_{AB}$ and omit the superscript denoting the amount of distillation iterations $k$ that have been performed as described in Subsection \ref{CH:r_ent}. To quantify this key distillation rate $r_\mathrm{key}$, we use the asymptotic one-way secret key capacity bound introduced by Devetak and Winter \cite{Devetak2005}
\begin{equation}\label{eq:Devetak_Winter}
    r_\mathrm{key}(\varrho_{ABE}, \mathcal{B}_A, \mathcal{B}_B) = I(A:B) - \chi(A:E),
\end{equation}
which is an upper bound to the secret key capacity within finite scenarios \cite{Ettinger2025}. 
The classical mutual information $I(A:B)$ between the classical bit strings, after A and B perform a local projection-valued measurement on their respective subsystem, within the basis $\mathcal{B}_A$ and $\mathcal{B}_B$, respectively, can be given by 
\begin{equation}
    I(A:B) = H(A) + H(B) - H(A,B).
\end{equation}
Here $H(\square)$ stands for the classical Shannon entropy and is defined as
\begin{equation}\label{eq:H()}
    H(X) = - \sum_{x \in \sigma[\hat{X}(\mathcal{B}_X)]} p_X(x) \ld p_X(x),
\end{equation}
with the measurement probability
\begin{equation}
    p_X(x) = \bra{x}\varrho_X \ket{x},
\end{equation} 
and $X$, $x$ representing either A, $a$ or B, $b$ with their respective observables $\hat{A}(\mathcal{B}_A)$ and $\hat{B}(\mathcal{B}_B)$. As an extension of (\ref{eq:H()}), the classical joint Shannon entropy is given by
\begin{equation}\label{eq:H(A,B)}
    H(A,B) = - \sum_{\substack{a \in \sigma[\hat{A}(\mathcal{B}_A)] \\ b\in \sigma[\hat{B}(\mathcal{B}_B)]}} p_{A,B}(a,b) \ld p_{A,B}(a,b),
\end{equation}
with 
\begin{equation}\label{eq:p(a,b)}
    p_{A,B}(a, b) = (\bra{a} \otimes \bra{b}) \varrho_{AB} (\ket{a} \otimes \ket{b}). 
\end{equation}
Thus, the classical mutual information $I(A:B)$ only depends on the shared quantum state $\varrho_{AB}$ and the measurement bases $\mathcal{B}_A$ and $\mathcal{B}_B$ chosen by A and B. On the contrary, the Holevo bound 
\begin{equation}\label{eq:Holevo}
    \chi(A:E) = S(\varrho_E)-\sum_{a\in \sigma[\hat{A}(\mathcal{B}_A)]}  p_A(a)S(\varrho_E^a),
\end{equation}
is based on the von Neumann entropy 
\begin{equation}\label{eq:S()}
   S(\varrho) = - \sum_{\lambda\in \sigma[\varrho]} \lambda \ld \lambda,
\end{equation}
and involves the challenge, that it is dependent on the quantum state within Eves possession, i.e., $\varrho_E$ and $\varrho_E^{a}$. Here $\varrho_E^{a}$ is the state in Eve's possession after Alice has measured the result $a$, belonging to the post measurement state $\ket{a} \in \mathcal{B}_A$, on her subsystem. This modelling of the key distillation rate as given by (\ref{eq:Devetak_Winter}), therefore justifies the statement that we made in the previous subsection, i.e., that only the measurement bases are of importance for the key distillation rate and not their explicit mapping to the used observables.

Based on the ideas of \cite{Ferenczi2012} the work of \cite{Ettinger2025} has shown that the influence of the quantum state in E's possession can be fully described via the shared quantum state $\varrho_{AB}$ between A and B if we assume that $\varrho_{ABE}$ is a pure state, as we do (see Subsection \ref{CH:Entangled_Link_Generation}). Since the marginal entropies of pure states are equal, it holds that $S(\varrho_E) = S(\varrho_{AB})$. The measurement of Alice on her subsystem changes the overall system into a \ac{cqq} state $\varrho_{aBE} = \ket{a}\bra{a} \otimes \varrho_{BE}^a$, with probability $p_A(a)$. Since we assume the measured state $\varrho_{ABE}$ to be a pure state, the post measurement state $\varrho_{aBE}$ is also a pure state. From this, it directly follows that the subsystem $\varrho_{BE}^a = \tr_A[\varrho_{aBE}]$ is also a pure state. Hence, we can use the same argumentation as before to state that $S(\varrho_E^a) = S(\varrho_B^a)$. This finally allows us to rewrite the Holevo information of (\ref{eq:Holevo}) into 
\begin{equation}
\label{eq:Holevo_AB}
    \chi(A:E) = S(\varrho_{AB}) - \sum_{a \in \sigma[\hat{A}(\mathcal{B}_A)]} p_A(a)S(\varrho_B^a),
\end{equation}
which allows us to represent the Holevo information between A and E only dependent on the known system of Alice and Bob. Furthermore, based on this, we can rewrite the key distillation rate of (\ref{eq:Devetak_Winter}) into 
\begin{equation}\label{eq:Key_Rate}
    \begin{split}
        r_\mathrm{key}(\varrho_{AB}, \mathcal{B}_A, \mathcal{B}_B) =& H(A)+H(B)-H(A,B)-S(\varrho_{AB}) \\
        & + \sum_{a \in \sigma[\hat{A}(\mathcal{B}_A)]} p_A (a)S(\varrho_B^a).
    \end{split}
\end{equation}

As described within the previous Subsection \ref{CH:r_sift}, one critical step of the processing strategy is choosing the measurement bases of A and B. Therefore, our third and final research question is \emph{RQ3: What combination of measurement bases $\mathcal{B}_A$ and $\mathcal{B}_B$, chosen by A and B, is optimal with respect to the key distillation rate when processing a bipartite entangled state $\varrho_{AB}$ and what is this optimal rate?} As this is the first research question whose answer is not dependent on one of the other research questions we will start with investigating \emph{RQ3} in the following Section. 

\section{Optimal Measurement Bases $\mathcal{B}_A$ and $\mathcal{B}_B$}\label{CH:Optimal_key_rate}
The answer to \emph{RQ3} is given by finding the optimal key distillation rate 
\begin{equation}\label{eq:r_key_opt}
    r_\mathrm{key}^\prime(\varrho_{AB}) = \max_{\mathcal{B}_A, \mathcal{B}_B} r_\mathrm{key}(\varrho_{AB}, \mathcal{B}_A, \mathcal{B}_B). 
\end{equation}
Before we solve this optimization problem, we further simplify the expression (\ref{eq:Key_Rate}) for the case that $\varrho_{AB} \in \mathcal{D}_\mathrm{B}({\cal H}_{AB})$, which is the case for any potentially processed state, as stated in Subsection \ref{CH:r_ent}. Firstly we use the fact that independent of the chosen measurement bases $\mathcal{B}_A$ and $\mathcal{B}_B$, as well as the measurement outcomes $a \in \sigma[\hat{A}(\mathcal{B}_A)]$ and $b \in \sigma[\hat{B}(\mathcal{B}_B)]$, it holds for any $\varrho_{AB} \in \mathcal{D}_\mathrm{B}(\mathcal{H}_{AB})$ that
\begin{equation}\label{eq:p_A}
    p_A (a) = p_B (b) = \frac{1}{2} ,
\end{equation}
which is due to the fact that both of the reduced states $\varrho_A = \tr_B(\varrho_{AB})$ and $\varrho_B = \tr_A(\varrho_{AB})$ are equal to 
\begin{equation}\label{eq:rho_A}
\varrho_A = \varrho_B = \Pi_2.     
\end{equation}
From this, it directly follows that for any $\varrho_{AB} \in \mathcal{D}_\mathrm{B}(\mathcal{H}_{AB})$, and independent of $\mathcal{B}_A$ and $\mathcal{B}_B$, it holds that
\begin{equation}
    H(A) = H(B) = 1.    
\end{equation}
Next, we use the following relationship between the resulting post measurement states when A measured her subsystem. 
\begin{proposition}
    The von Neumann entropy of the conditional post measurement state $\varrho_B^a$, that results from A measuring the state $\varrho_{AB} \in \mathcal{D}_\mathrm{B}(\mathcal{H}_{AB})$ within the orthonormal basis $\mathcal{B}_A = \{\ket{\alpha},\ket{\alpha^\perp}\}$, depends on the chosen measurement basis $\mathcal{B}_A$ but is independent of the actual measurement result $\alpha$ or $\alpha^\perp$ of A, i.e.,  
    \begin{equation}\label{eq:Prop_1}
       S\left(\varrho_B^\alpha\right) = S\left(\varrho_B^{\alpha^\perp}\right) \ \forall \ket{\alpha},\ket{\alpha^\perp} \in \mathcal{H}_A.
    \end{equation}
    \begin{proof}
        For any bipartite quantum state $\varrho_{AB} \in {\cal D}({\cal H}_{AB})$, it holds that 
        \begin{equation}\label{eq:Partial_Post_Measurement_State}
            \varrho_B = \tr_A[\varrho_{AB}] = \sum_{a\in \sigma[\hat{A}(\mathcal{B}_A)]} p_A(a) \varrho_B^a.
        \end{equation}
        Using this and the statements given by (\ref{eq:p_A}) and (\ref{eq:rho_A}), it follows that for any state $\varrho_{AB} \in \mathcal{D}_\mathrm{B}({\cal H}_{AB})$ 
        \begin{equation}
            \varrho_B = \Pi_2 = \frac{1}{2} \varrho_B^\alpha + \frac{1}{2} \varrho_B^{\alpha^\perp},
        \end{equation}
        independent of the measurement basis $\mathcal{B}_A$ chosen by A. As a result from this, we can directly conclude that $\varrho_B^\alpha + \varrho_B^{\alpha^\perp} = I_2$. Thus, it follows that
        \begin{equation}
            \varrho_B^{\alpha^\perp} = ZX(\varrho_B^{\alpha})^* XZ,
        \end{equation} 
        with $X$ and $Z$ being Pauli operators. For any $\varrho \in {\cal D}({\cal H})$ it holds that $\sigma[\varrho_B^{\alpha}] = \sigma[(\varrho_B^{\alpha})^*]$, which combined with the fact that the von Neumann entropy is invariant under unitary actions, i.e., $S(\varrho) = S(U \varrho U^\dagger)$ for any unitary operator $U$, it follows that $S(\varrho_B^{\alpha}) = S( ZX(\varrho_B^{\alpha})^* XZ)$. Therefore, we proved that the von Neumann entropy of the conditional states of B is independent of the measurements result of A.   
    \end{proof}
\end{proposition}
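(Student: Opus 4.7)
The plan is to reduce the claim to an elementary linear-algebraic fact about qubits, bypassing the Pauli manipulation that the authors seem to hint at. My starting point is the decomposition already recalled in the paper: using the partial-trace identity (\ref{eq:Partial_Post_Measurement_State}), combined with $\varrho_B = \Pi_2$ from (\ref{eq:rho_A}) and the uniform marginal $p_A(\alpha) = p_A(\alpha^\perp) = 1/2$ from (\ref{eq:p_A}). For any $\varrho_{AB} \in \mathcal{D}_\mathrm{B}(\mathcal{H}_{AB})$ this immediately yields the operator identity $\frac{1}{2} \varrho_B^\alpha + \frac{1}{2} \varrho_B^{\alpha^\perp} = \frac{1}{2} I_2$, which rearranges to $\varrho_B^{\alpha^\perp} = I_2 - \varrho_B^\alpha$.

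Next I would exploit the two-dimensionality of $\mathcal{H}_B$. Since $\varrho_B^\alpha$ is a $2 \times 2$ density operator, its spectrum has the form $\{\lambda, 1-\lambda\}$ for some $\lambda \in [0,1]$, and subtracting from $I_2$ simply interchanges these two eigenvalues, so $\sigma[\varrho_B^{\alpha^\perp}] = \{1-\lambda,\lambda\} = \sigma[\varrho_B^\alpha]$ as a multiset. Because the von Neumann entropy in (\ref{eq:S()}) depends only on the spectrum, this delivers $S(\varrho_B^\alpha) = S(\varrho_B^{\alpha^\perp})$, which is exactly the content of the proposition. Note that $\lambda$ itself generally depends on the basis $\mathcal{B}_A$, so the common entropy of the two conditional states is not universal in $\mathcal{B}_A$; only the equality between the two outcomes is, which is consistent with (\ref{eq:Prop_1}).

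The only point requiring care is the input assumption $\varrho_B = \Pi_2$, which has to hold for every Bell-diagonal state; this can be verified by tracing out A directly in (\ref{eq:Bell_Diagonal}), since each of the four Bell projectors $\ket{\Phi^\pm}\bra{\Phi^\pm}$ and $\ket{\Psi^\pm}\bra{\Psi^\pm}$ has a maximally mixed reduced state, and convex combinations preserve this property. There is no substantial obstacle beyond this verification. In particular, the Pauli-conjugation identity $\varrho_B^{\alpha^\perp} = ZX(\varrho_B^\alpha)^* XZ$ that appears in the authors' sketch is not strictly needed: the qubit dimension of $\mathcal{H}_B$ already forces any two density operators summing to $I_2$ to be isospectral, and therefore isentropic.
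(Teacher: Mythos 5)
Your proof is correct, and it replaces the paper's final step with a genuinely more elementary argument. Both you and the authors start from the same decomposition: $p_A(\alpha)=p_A(\alpha^\perp)=\tfrac12$ and $\varrho_B=\Pi_2$ give $\varrho_B^\alpha+\varrho_B^{\alpha^\perp}=I_2$, and your verification that every Bell-diagonal state has maximally mixed marginals (each Bell projector traces to $\Pi_2$, and convexity preserves this) is exactly the input the paper invokes via (\ref{eq:p_A}) and (\ref{eq:rho_A}). Where you diverge is in extracting the entropy equality from that operator identity. The paper asserts the explicit spin-flip relation $\varrho_B^{\alpha^\perp}=ZX(\varrho_B^\alpha)^*XZ$ and then combines invariance of the spectrum under complex conjugation of a positive semidefinite operator with unitary invariance of $S$; that relation does hold for any $2\times2$ Hermitian operator of unit trace, but the paper states it without justification, and it carries more structural information than the proposition needs. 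You instead observe that a qubit density operator has spectrum $\{\lambda,1-\lambda\}$, so $I_2-\varrho_B^\alpha$ has the same spectrum as a multiset, and $S$ depends only on the spectrum. This is shorter, avoids an unproved identity, and makes transparent exactly which hypothesis is doing the work (the two-dimensionality of $\mathcal{H}_B$ together with $\tr\varrho_B^\alpha=1$); the trade-off is that the paper's version exhibits the concrete antiunitary relation between the two conditional states, which could be reused elsewhere, whereas your argument only establishes isospectrality. Your closing remark that the common entropy still depends on $\mathcal{B}_A$ in general, and only the equality between the two outcomes is basis-independent, is also consistent with the proposition as stated.
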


Using the results in (\ref{eq:p_A}) and (\ref{eq:Prop_1}) we can reformulate the expression of the key distillation rate of Equation (\ref{eq:Key_Rate}) into 
\begin{equation}\label{eq:r_key_reduced}
    r_\text{key}(\varrho_{AB},\mathcal{B}_A,\mathcal{B}_B) = 2-H(A,B)-S(\varrho_{AB}) + S\left(\varrho_B^{\alpha}\right), 
\end{equation}
for any $\varrho_{AB} \in \mathcal{D}_\mathrm{B}({\cal H}_{AB})$. Since the term $S(\varrho_{AB})$ is not dependent on the choice of measurement bases, the optimization problem of (\ref{eq:r_key_opt}) is identical to that of
\begin{equation}\label{eq:max_measurement_bases}
    \max_{\mathcal{B}_A, \mathcal{B}_B} S\left(\varrho_B^{\alpha}\right) -H(A,B).
\end{equation}
And since $\mathcal{D}_\mathrm{W}({\cal H}_{AB}) \subset \mathcal{D}_\mathrm{B}({\cal H}_{AB})$ the above statements also hold for any $\varrho_{AB} \in \mathcal{D}_\mathrm{W}({\cal H}_{AB})$, which we want to investigate for now, as described within Subsection \ref{CH:r_ent} (see Section \ref{CH:Post_Distillation} for the generalization to $\varrho_{AB} \in \mathcal{D}_\mathrm{B}({\cal H}_{AB})$). To solve (\ref{eq:max_measurement_bases}), we use the following convenient representation of the canonical entangled state.
\begin{proposition}\label{prop:bell_state}
    The Bell state $\ket{\Phi^+}$ can be written as 
        \begin{equation}\label{eq:Bell_Decomposition}
            \ket{\Phi^+} = \frac{1}{\sqrt{2}}\left(\ket{\psi}\otimes\ket{\psi^*} + \ket{\psi^\perp}\otimes\ket{\psi^{\perp*}} \right),
    \end{equation}
    for any pair of orthonormal states $\{\ket{\psi}, \ket{\psi^\perp}\}$. 
\begin{proof}
    We define $\ket{\psi} = c_0 \ket{0} + c_1 \ket{1}$, with $c_0, c_1 \in \mathbb{C}$. From the fact that $\{\ket{\psi}, \ket{\psi^\perp}\}$ shall form an orthonormal basis, it follows that $\ket{\psi^\perp} = c_1^* \ket{0} - c_0^* \ket{1}$ and that $|c_0|^2 + |c_1|^2 = 1$. Using this, it follows that independent of $c_0$ and $c_1$, the expression of (\ref{eq:Bell_Decomposition}) yields the well known form of the Bell state $\ket{\Phi^+}$ as given within (\ref{eq:Phi^+}). 
\end{proof}
\end{proposition}

\begin{proposition}\label{prop:r_key_werner}
    For Werner states $\varrho_{AB} \in \mathcal{D}_\mathrm{W}(\mathcal{H}_{AB}) $, the optimal key distillation rate as given by (\ref{eq:r_key_opt}) can be achieved for any arbitrary orthonormal measurement basis $\mathcal{B}_A$ as long as B chooses his measurement basis in accordance to
    \begin{equation}\label{eq:prop_optimal_basis}
        \mathcal{B}_B = \{\ket{\psi^*} | \ket{\psi} \in \mathcal{B}_A\}.
    \end{equation}
    If this optimal choice is made, the key distillation rate equals
    \begin{align}
        r_\mathrm{key}^\prime(\varrho_{AB}) &=  1-S(\varrho_{AB}) \label{eq:r_key_werner_alt} \\ 
        &=  1+F\ld F + (1-F) \ld \frac{1-F}{3}.  \label{eq:r_key_werner}    
    \end{align} 

\begin{proof}
    We begin with showing that for $\varrho_{AB} \in \mathcal{D}_\mathrm{W}(\mathcal{H}_{AB})$, the term $S\left(\varrho_B^{\alpha}\right)$ is not dependent on $\mathcal{B}_A$ at all. To do so, we first compute 
    \begin{equation}
        \varrho_B^{\alpha} = \frac{\tr_A[(M_\alpha \otimes I_2)\varrho_{AB}]}{p_A(\alpha)}
    \end{equation}
    with $M_\alpha = \ket{\alpha}\bra{\alpha}$. For a more streamlined notation we will use the Werner parameter $w = 4\frac{1-F}{3}$ to represent the state $\varrho_{AB}$ of (\ref{eq:Werner_State}). Using the result from (\ref{eq:p_A}) and the representation of $\ket{\Phi^+}$ as given by (\ref{eq:Bell_Decomposition}) with $\ket{\psi} = \ket{\alpha}$, we end up with 
    \begin{equation}
        \varrho_B^{\alpha}  = (1-w)\left(\ket{\alpha} \bra{\alpha}\right)^* + w \Pi_2.
    \end{equation}
    With $\ket{\alpha}$ being a normalized state, i.e., $\braket{\alpha|\alpha}=1$, it holds that $\sigma\left[\varrho_B^{\alpha}\right] = \{\frac{w}{2}, 1-\frac{w}{2}\}$ from which it directly follows that $S\left(\varrho_B^{\alpha}\right)$ is not dependent on $\ket{\alpha}$ and therefore also completely independent of $\mathcal{B}_A$. Thus, we can reformulate our problem of (\ref{eq:max_measurement_bases}) to minimizing the classical joint Shannon entropy $H(A,B)$, which can be computed by (\ref{eq:H(A,B)}). 
    We use again the representation of $\ket{\Phi^+}$ as given in (\ref{eq:Bell_Decomposition}) with $\ket{\psi} = \ket{a}$, and the Werner parameter $w$ to express the general joint measurement probability based on (\ref{eq:p(a,b)}), as
    \begin{equation} \label{eq:p_AB}
        p_{A,B}(a, b) = \frac{1-w}{2}|\braket{b|a^*}|^2+\frac{w}{4}. 
    \end{equation}
    From this it follows that $p_{A,B}(\alpha,\beta)  = p_{A,B}(\alpha^\perp,\beta^\perp)$ and that $p_{A,B}(\alpha,\beta^\perp) = p_{A,B}(\alpha^\perp,\beta) = 0.5 -p_{A,B}(\alpha,\beta)$. In combination with the fact that the classical Shannon entropy $H(A,B)$ of (\ref{eq:H(A,B)}) is a strictly concave function, it follows that the minimum of $H(A,B)$ is achieved for the extreme points of $p_{A,B}(a,b)$. Using the result given by (\ref{eq:p_AB}) and the fact that for normalized quantum states $\ket{a}$ and $\ket{b}$ it holds that $|\braket{b|a^*}| \in  [0,1]$, it follows that the minimal achievable value for $H(A,B)$ can be achieved for any arbitrary chosen measurement basis $\mathcal{B}_A$ if and only if we choose the corresponding measurement basis for B to be $\mathcal{B}_B = \{\ket{\psi^*} | \ket{\psi} \in \mathcal{B}_A\}$, therefore proving the statement of (\ref{eq:prop_optimal_basis}). Using this optimal choice for the measurement bases to compute the statement of (\ref{eq:max_measurement_bases}), one gets $S(\varrho_B^\alpha)-H(A,B) = -1$ which when inserted into (\ref{eq:r_key_opt}) and (\ref{eq:r_key_reduced}), directly results in (\ref{eq:r_key_werner_alt}). Finally, when using (\ref{eq:Werner_State}) and (\ref{eq:S()}), one can derive (\ref{eq:r_key_werner}) based on (\ref{eq:r_key_werner_alt}) and therefore proving Proposition \ref{prop:r_key_werner}.
\end{proof}
\end{proposition}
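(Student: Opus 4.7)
The plan is to start from the simplified key-rate expression (\ref{eq:r_key_reduced}) that the paper has already established for Bell-diagonal (and therefore Werner) states, namely $r_\mathrm{key} = 2 - H(A,B) - S(\varrho_{AB}) + S(\varrho_B^{\alpha})$. Since $S(\varrho_{AB})$ is a property of the state alone and has no dependence on the measurement bases, the optimization in (\ref{eq:r_key_opt}) collapses to maximizing $S(\varrho_B^{\alpha}) - H(A,B)$ over the choice of $\mathcal{B}_A$ and $\mathcal{B}_B$. My strategy is then to handle the two terms separately: first to argue that $S(\varrho_B^{\alpha})$ is in fact a constant for Werner states, and second to reduce the joint-entropy minimization to a one-parameter problem whose minimum is attained precisely at the conjugated basis.

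For the first step, I would compute $\varrho_B^{\alpha}$ explicitly by applying the projector $\ket{\alpha}\bra{\alpha}\otimes I_2$ to the Werner state written in the parameterization $w=4(1-F)/3$. Using Proposition 2 with $\ket{\psi}=\ket{\alpha}$ to expand $\ket{\Phi^+}$, together with $p_A(\alpha)=1/2$, I expect $\varrho_B^{\alpha}$ to reduce to $(1-w)\ket{\alpha^*}\bra{\alpha^*} + w\,\Pi_2$, whose spectrum $\{w/2, 1-w/2\}$ is independent of $\ket{\alpha}$. Hence $S(\varrho_B^{\alpha})$ is a constant regardless of the basis choice on Alice's side, and the problem reduces to minimizing $H(A,B)$.

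For the joint entropy, I would derive the closed-form joint distribution $p_{A,B}(a,b) = \frac{1-w}{2}|\braket{b|a^*}|^2 + w/4$ from (\ref{eq:p(a,b)}), again invoking Proposition 2. Writing $q := |\braket{\beta|\alpha^*}|^2 \in [0,1]$, the joint distribution becomes a function of the single parameter $q$, and the symmetries $p_{A,B}(\alpha,\beta)=p_{A,B}(\alpha^\perp,\beta^\perp)$ and $p_{A,B}(\alpha,\beta^\perp)=1/2-p_{A,B}(\alpha,\beta)$ show that $H(A,B)$ equals $1+h_2(p_{A,B}(\alpha,\beta))$ for some binary-entropy-style function. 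This is minimized when the argument is as far from $1/4$ as possible, which corresponds to $q=1$, i.e., to $\mathcal{B}_B = \{\ket{\psi^*} : \ket{\psi}\in\mathcal{B}_A\}$, proving (\ref{eq:prop_optimal_basis}). Substituting this choice back, I expect $S(\varrho_B^{\alpha}) - H(A,B) = -1$, so that $r_\mathrm{key}^\prime(\varrho_{AB}) = 1 - S(\varrho_{AB})$, which gives (\ref{eq:r_key_werner_alt}). Computing the spectrum $\sigma[\varrho_{AB}] = \{F,(1-F)/3,(1-F)/3,(1-F)/3\}$ of the Werner state and substituting into (\ref{eq:S()}) then yields (\ref{eq:r_key_werner}).

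The main obstacle I anticipate is the entropy-minimization step: rigorously arguing that $H(A,B)$ is minimized precisely when Bob's basis consists of the componentwise complex conjugates of Alice's basis vectors, rather than, for instance, when Alice and Bob agree on the same basis. The subtlety is that the Bell state decomposition (\ref{eq:Bell_Decomposition}) inherently involves complex conjugation, so perfect anti-diagonal correlation on the pure-Bell component of the Werner state emerges only for the conjugated pairing. Everything else — the reduction of the optimization, the eigenvalue computations, and the final substitution — should be routine algebra once the right parameterization is in hand.
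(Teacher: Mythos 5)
Your proposal is correct and follows essentially the same route as the paper's own proof: reduce (\ref{eq:r_key_opt}) to maximizing $S(\varrho_B^{\alpha})-H(A,B)$, show $\varrho_B^{\alpha}=(1-w)\ket{\alpha^*}\bra{\alpha^*}+w\Pi_2$ has a basis-independent spectrum, derive $p_{A,B}(a,b)=\frac{1-w}{2}|\braket{b|a^*}|^2+\frac{w}{4}$ via the decomposition (\ref{eq:Bell_Decomposition}), and minimize $H(A,B)$ at the conjugated basis. Your explicit one-parameter reduction via $q=|\braket{\beta|\alpha^*}|^2$ and the observation that $H(A,B)=1+h_2(\cdot)$ is in fact slightly more detailed than the paper's own treatment of the minimization step, so no gap remains.
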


Thus, Proposition \ref{prop:r_key_werner} answers \emph{RQ3} for the case that the measured state is in the form of a Werner state as given in (\ref{eq:Werner_State}). Further, we can directly see within (\ref{eq:r_key_werner}) that even when choosing the optimal measurement bases and assuming optimal postprocessing, the measured state $\varrho_{AB}$ needs to have a minimum threshold fidelity $F_\mathrm{th} \approx 0.8107$ to be able to achieve a positive key rate. This does not yet answer \emph{RQ1}, but at least gives a lower bound on how much distillation iterations are at least necessary to achieve a positive secure key rate $r$. 

Furthermore, the results given by Proposition \ref{prop:bell_state} and \ref{prop:r_key_werner} show, that the frequently found assumption within the literature—that the measurement of the Bell state $\ket{\Phi^+}$, within the same basis, results in a perfect correlated measurement result—is not generally correct. Rather, perfect correlated measurement results are only obtained, if the condition of (\ref{eq:prop_optimal_basis}) holds. Thus, only if the basis states have no relative phase, both bases need to be equal for perfect correlations.

\section{Processing Strategies}\label{CH:Processing}
Within Subsection \ref{CH:r_sift}, we described the dependence of the processing rate on the distribution of the chosen measurement bases of A and B as well as the parameter $\gamma$. Based on Equation (\ref{eq:r_key_reduced}), and Proposition \ref{prop:r_key_werner}, we can now further investigate the processing rate. Generally speaking, the optimal processing rate is given by
\begin{equation}\label{eq:optimal_r_proc}
    r_\mathrm{proc}^\prime(\varrho_{AB}^k,\eta) = \max_{\mathcal{B}_{A3},\mathcal{B}_{B3},\gamma} r_\mathrm{proc}(\varrho_{AB}^k,\mathcal{B}_{A3},\mathcal{B}_{B3},\eta,\gamma),
\end{equation}
which is not solved generally, but for two specific processing strategies in the following. Before we start with the investigation, we first need to define the measurement bases used for the \ac{CHSH} measurements. For the \ac{CHSH} estimation, the full definition of the observables actually plays a role, but as we are only interested in how the remaining choices of measurements influence the key distillation rate and, thus, the processing rate, we are still fine with only considering the resulting measurement bases. As shown by, e.g., \cite{Aspect2002}, the choice of these observables and therefore, for the resulting measurement bases used to measure the \ac{CHSH} value, is not one fixed quartet, but rather a set of observables that preserves the essential geometric structures used for the computation of the \ac{CHSH} value, where all the quartets of observables differ by a global rotation. Since both strategies which will be investigated in the following, rely on choosing $\mathcal{B}_{A3}$ and $\mathcal{B}_{B3}$ to be equal to one of these bases used for the \ac{CHSH} estimation, their specific choice has no influence. Thus, we orient ourselves on the literature and choose without loss of generality $\hat{A}_1 = Z$, $\hat{A}_2 = X$, $\hat{B}_1=\frac{Z-X}{\sqrt{2}}$ and $\hat{B}_2 = \frac{Z+X}{\sqrt{2}}$, see, e.g., \cite{Ekert1991}. This therefore results in the following measurement bases:
\begin{subequations}\label{eq:E91_Bases}
    \begin{align}
        \mathcal{B}_{A1} &= \{\ket{0}, \ket{1}\}, \\
        \mathcal{B}_{A2} &= \left\{\frac{1}{\sqrt{2}}\big(\ket{0}+\ket{1}\big), \frac{1}{\sqrt{2}}\big(\ket{0}-\ket{1}\big)\right\}, \\
        \mathcal{B}_{B1} &= \left\{\frac{(1+\sqrt{2})\ket{0}+\ket{1}}{\sqrt{2(2+\sqrt{2})}}, \frac{(1-\sqrt{2})\ket{0}+\ket{1}}{\sqrt{2(2-\sqrt{2})}}\right\}, \\
        \mathcal{B}_{B2} &= \left\{\frac{(1-\sqrt{2})\ket0+\ket{1}}{-\sqrt{2(2-\sqrt{2})}}, \frac{(1+\sqrt{2})\ket{0}+\ket{1}}{-\sqrt{2(2+\sqrt{2})}}\right\}.
    \end{align}
\end{subequations}
In the following, we want to investigate and compare two strategies for choosing the remaining bases $\mathcal{B}_{A3}$ and $\mathcal{B}_{B3}$ as well as their measurement probabilities given through $\gamma$. The first strategy is the one that can often be found within the literature, e.g., \cite{Acin2006,Pironio2009}, while the second one is, to the best of our knowledge, a new processing strategy that can be shown to achieve higher rates within certain scenarios. As both strategies rely on choosing $\mathcal{B}_{A3}$ and $\mathcal{B}_{B3}$ to be equal to one of the bases used for the \ac{CHSH} estimation as given within (\ref{eq:E91_Bases}), we first start with investigating the key generation rate of (\ref{eq:r_key_reduced}), for the cases that we choose these combinations of bases. If A and B choose the same basis, we can use Proposition \ref{prop:r_key_werner} to determine the resulting key distillation rate (as none of the states given by (\ref{eq:E91_Bases}) has a relative phase), while for the other cases we can use the following two propositions.

\begin{proposition}\label{prop:waste}
    When performing measurements within the two distinct bases used by either A or B for the \ac{CHSH} estimation, i.e., if A measures within $\mathcal{B}_{A} = \mathcal{B}_{Xi}$ and B within $\mathcal{B}_{B} = \mathcal{B}_{Xj}$ with $X\in\{A,B\}$, $i,j\in\{1,2\}$ and $i \neq j$, no positive key distillation rate can be obtained, independently of the measured state $\varrho_{AB}^k \in \mathcal{D}_\mathrm{W}(\mathcal{H}_{AB})$.
\begin{proof}
    Based on the result of Equation (\ref{eq:r_key_reduced}) and the result of (\ref{eq:p_AB}), we can insert all the possible combinations for the measurement bases to always get the result $p_{A,B}(a,b) = \frac{1}{4}$ for any $\varrho_{AB}^k$ being measured. From this it directly follows that $H(A,B) = 2$, which in turn means that the key distillation rate of (\ref{eq:Devetak_Winter}) can be rewritten into $r_\mathrm{key} = - \chi(A:E)$. And since $ \chi(A:E) \geq 0$, it follows that $r_\mathrm{key} \leq 0$, i.e., that no secure key can be generated when using these measurement bases.
\end{proof}
\end{proposition}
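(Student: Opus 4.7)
The plan is to show that, for each of the four measurement-basis combinations covered by the statement, the joint outcome distribution $p_{A,B}(a,b)$ becomes uniform on its four-element support, which forces $I(A:B) = 0$ and collapses the Devetak--Winter rate (\ref{eq:Devetak_Winter}) to the manifestly non-positive quantity $-\chi(A:E)$. The whole argument runs from the reduced expression (\ref{eq:r_key_reduced}) together with the Werner-state joint-probability formula (\ref{eq:p_AB}); no further properties of $\varrho_{AB}^k$ are needed.

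The key observation is that each allowed pair lies inside one of the two CHSH families $\{\mathcal{B}_{A1}, \mathcal{B}_{A2}\}$ or $\{\mathcal{B}_{B1}, \mathcal{B}_{B2}\}$, and within each family the two bases are mutually unbiased. For the $A$-family this is immediate, since $\mathcal{B}_{A1}$ and $\mathcal{B}_{A2}$ diagonalize the anticommuting Pauli operators $Z$ and $X$. For the $B$-family, $\mathcal{B}_{B1}$ and $\mathcal{B}_{B2}$ diagonalize $(Z-X)/\sqrt{2}$ and $(Z+X)/\sqrt{2}$, whose anticommutator is $(Z-X)(Z+X) + (Z+X)(Z-X) = 2(Z^2 - X^2) = 0$, so they are MUB as well. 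A glance at (\ref{eq:E91_Bases}) also shows that all four bases have real coefficients, so in (\ref{eq:p_AB}) the complex conjugate can be dropped, leaving $|\braket{b|a^*}|^2 = |\braket{b|a}|^2 = 1/2$ for every outcome pair in each of the four combinations.

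Feeding this overlap into (\ref{eq:p_AB}) gives $p_{A,B}(a,b) = (1-w)/4 + w/4 = 1/4$, independently of the Werner parameter $w$ and therefore of the underlying fidelity. From this $H(A,B) = 2$, and combined with the already established fact that $H(A) = H(B) = 1$ for Bell-diagonal states (cf.\ (\ref{eq:p_A}) and (\ref{eq:rho_A})), one obtains $I(A:B) = 0$. Consequently $r_\mathrm{key} = -\chi(A:E)$, which is non-positive because the Holevo information is always non-negative, finishing the argument.

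The derivation is essentially a single direct computation; the only place that could look subtle is verifying the mutually unbiased character of the visibly asymmetric bases $\mathcal{B}_{B1}$ and $\mathcal{B}_{B2}$. Recognising them as eigenbases of two anticommuting Hermitian operators of equal Pauli norm sidesteps the messy trigonometry that a direct inner-product calculation from the explicit vectors in (\ref{eq:E91_Bases}) would require.
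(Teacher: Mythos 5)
Your proof is correct and follows essentially the same route as the paper: reduce to showing $p_{A,B}(a,b)=\tfrac14$ via (\ref{eq:p_AB}), conclude $H(A,B)=2$ and hence $I(A:B)=0$, and finish with $r_\mathrm{key}=-\chi(A:E)\leq 0$. The only difference is cosmetic --- where the paper simply asserts the overlap computation for all four basis combinations, you justify $|\braket{b|a^*}|^2=\tfrac12$ by noting that each CHSH family consists of eigenbases of anticommuting observables and is therefore mutually unbiased, which is a tidier way to organize the same calculation.
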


\begin{proposition}\label{prop:r_key_suboptimal}
    When performing measurements within a combination of bases that are also used for the \ac{CHSH} estimation, i.e., if A measures within $\mathcal{B}_{A} = \mathcal{B}_{Xi}$ and B within $\mathcal{B}_{B} = \mathcal{B}_{Yj}$ with $X,Y\in\{A,B\}$, $X \neq Y$ and $i,j\in\{1,2\}$, the resulting key distillation rate will be independent on the choice of $\mathcal{B}_{Xi}$ and $\mathcal{B}_{Yj}$. Further, it will be positive for $F(\varrho_{AB}^k) > F_\mathrm{bnd}$, but will always be less than half of what can be achieved when choosing the optimal measurement bases, i.e., 
    \begin{equation}\label{eq:r_key>r_key^prime}
        \frac{1}{2} r_\mathrm{key}^\prime(\varrho_{AB}^k) > r_\mathrm{key}(\varrho_{AB}^k,\mathcal{B}_{Xi},\mathcal{B}_{Yj}).
    \end{equation}
\begin{proof}
    Based on the results established in the proof of Proposition \ref{prop:r_key_werner}, i.e., the fact that $\varrho_B^\alpha$ is independent of the chosen measurement basis $\mathcal{B}_A$ and the result for the general joint measurement probability of (\ref{eq:p_AB}), the expression for the key distillation rate of Equation (\ref{eq:r_key_reduced}) can be computed for any combination of $\mathcal{B}_{Xi}$ and $\mathcal{B}_{Yj}$, based on the measurement bases defined in (\ref{eq:E91_Bases}), and will always result in the same term, that is strictly monotonically increasing. Further, we compared the resulting term with the result for the optimal key distillation rate of (\ref{eq:r_key_werner_alt}) to derive the bound given by (\ref{eq:r_key>r_key^prime}).
\end{proof}
\end{proposition}
We numerically determined the value for the boundary fidelity to be $F_\mathrm{bnd} \approx 0.895$ by computing the value for which $r_\mathrm{key}(\varrho_{AB}^k,\mathcal{B}_{A1},\mathcal{B}_{B1}) = 0$ holds. Based on these results, we will further investigate two explicit processing strategies in the following subsections that we will call the asymmetric processing strategy and the symmetric processing strategy.

\subsection{Asymmetric Processing Strategy}\label{CH:Classical_Measurement_Strategy}
The original proposal of the \ac{E91} protocol states that A and B shall choose their third measurement basis to be equal to one of the measurement bases of the respective other party, i.e., $\mathcal{B}_{A3} = \mathcal{B}_{Bj}$ and $\mathcal{B}_{B3} = \mathcal{B}_{Ai}$ with $j,i\in \{1,2\}$. It directly can be seen that for the case that A uses $\mathcal{B}_{Ai}$ and B uses $\mathcal{B}_{B3} = \mathcal{B}_{Ai}$ for the measurement, and for the case that A uses $\mathcal{B}_{A3} = \mathcal{B}_{Bj}$ and B uses $\mathcal{B}_{Bj}$, the measurements result in an optimal key distillation rate as shown within Proposition \ref{prop:r_key_werner}. Based on these results it follows that the specific choices of $\mathcal{B}_{A3} = \mathcal{B}_{Bj}$ and $\mathcal{B}_{B3} = \mathcal{B}_{Ai}$ with $j,i\in \{1,2\}$ have no influence, and we therefore assume without loss of generality, from now on, that $\mathcal{B}_{A3} = \mathcal{B}_{B1}$ and $\mathcal{B}_{B3} = \mathcal{B}_{A1}$. Combining this with the results of Proportions \ref{prop:waste} and \ref{prop:r_key_suboptimal}, the asymmetric processing rate can be written as 
\begin{align}\label{eq:r_proc_1}
    r_\mathrm{proc,asym}&(\varrho_{AB}^k,\eta,\gamma) = \nonumber \\
    &\left(P_{A1,B3} + P_{A3,B1}\right) \cdot \max\{0,r_\mathrm{key}^\prime(\varrho_{AB}^k)\}  \\ 
    &+ P_{A3,B3} \cdot \max\{0,r_\mathrm{key}(\varrho_{AB}^k,\mathcal{B}_{B1},\mathcal{B}_{A1})\},\nonumber
\end{align}
for which we can optimize the value of $\gamma$ to get the optimal version of this specific processing strategy.
\begin{proposition}\label{prop:gamma_1}
    The optimal processing rate for the asymmetric processing strategy
    \begin{equation}
        r_\mathrm{proc,asym}^\prime(\varrho_{AB}^k,\eta) = \max_{\gamma \in [2\eta,0.5]} r_\mathrm{proc,asym}(\varrho_{AB}^k,\eta,\gamma)
    \end{equation} 
    is achieved for $\gamma = 2\eta$ or $\gamma = 0.5$ and is given by 
    \begin{equation}\label{eq:r_proc_1_prime}
       r_\mathrm{proc,asym}^\prime(\varrho_{AB}^k,\eta) = \left(\frac{1}{2}-2\eta\right)\max\{0,r_\mathrm{key}^\prime(\varrho_{AB}^k)\} .
    \end{equation}
\begin{proof}
    The property in (\ref{eq:r_key>r_key^prime}) can be extended to 
    \begin{equation}\label{eq:max_r_key>r_key^prime}
        \frac{1}{2}\max\{0,r_\mathrm{key}^\prime(\varrho_{AB}^k)\} \geq \max\{0,r_\mathrm{key}(\varrho_{AB}^k,\mathcal{B}_{B1},\mathcal{B}_{A1})\}, 
    \end{equation}
    which shows that it could only be worth measuring in the basis combination $\mathcal{B}_{A3}$ and $\mathcal{B}_{B3}$ if the corresponding sifting rate, i.e., the probability of $P_{A3,B3}$ to end up in this case, is at least twice that of the case that A and B choose the same measurement basis, which happens with the probability $P_{A1,B3}+P_{A3,B1}$. Since this is the case for no $\gamma$, it is optimal to choose $P_{A3,B3} = 0$, which can be done by either choosing $\gamma = 2\eta$ or $\gamma = 0.5$.
\end{proof}
\end{proposition}
The results of Proposition \ref{prop:gamma_1} are in line with what can be found as the processing strategy used within several works, e.g., \cite{Pironio2009} or \cite{Acin2006}. This strategy exploits the fact that reducing either A or B to only two measurement bases may result in a reduced total number of measurements that can be used for generating a secure key, but by doing so, the measurements that don't need to be discarded will always be within the optimal configuration of measurement bases, resulting in the maximal processing rate for this choice of measurement bases, hence the name asymmetric strategy. Contrary to that, we want to present an alternative in the following subsection that we call the symmetric strategy, which is more efficient under certain conditions.

\subsection{Symmetric Processing Strategy}
While the previous scheme chooses $\gamma$ such that no measurements take place within the measurement basis $\mathcal{B}_{A3}$ and $\mathcal{B}_{B3}$, we will exploit exactly this measurement combination next. We do so by choosing  $\mathcal{B}_{A3} = \mathcal{B}_{B3}$, being equal to one of the four bases $\mathcal{B}_{A1}, \mathcal{B}_{A2}, \mathcal{B}_{B1}$, or $\mathcal{B}_{B2}$ as defined within (\ref{eq:E91_Bases}), and 
\begin{equation}
        \gamma = \argmax_{\gamma \in [2\eta,0.5]} P_{A3,B3}= \sqrt{\eta},
\end{equation}
to maximize the probability of choosing the combination of measurement bases $\mathcal{B}_{A3}$ and $\mathcal{B}_{B3}$. This results in the name giving symmetric distribution for choosing the measurement bases between A and B, i.e., $P_{A1} = P_{A2} = P_{B1} = P_{B2} = \sqrt{\eta}$ and $P_{A3} = P_{B3} = 1-2\sqrt{\eta}$. Without loss of generality, we choose $\mathcal{B}_{A3} = \mathcal{B}_{B3} = \mathcal{B}_{A1}$. Using Propositions \ref{prop:r_key_werner}, \ref{prop:waste}, and \ref{prop:r_key_suboptimal} we end up with the following symmetric processing rate
\begin{equation} \label{eq:proc_2_prime}
        \begin{split}
            r_\mathrm{proc,sym}^\prime(\varrho_{AB}^k, \eta) =
            (1-3\sqrt{\eta}+2\eta) \cdot \max\{0,r_\mathrm{key}^\prime(\varrho_{AB}^k)\} \\
            + (2\sqrt{\eta}-4\eta)\cdot \max\{0,r_\mathrm{key}(\varrho_{AB}^k,\mathcal{B}_{A1},\mathcal{B}_{B1})\}. \quad
        \end{split}
\end{equation}

Using this, we can now compare the two processing strategies for different values of $\eta$ and $F\left(\varrho_{AB}^k\right)$.

\subsection{Comparison of the two Processing Strategies}
\begin{proposition}
    For any $\varrho_{AB}^k\in \mathcal{D}_\mathrm{W}(\mathcal{H}_{AB})$ if $\eta < 0.0625$, the symmetric processing strategy always outperforms the asymmetric processing strategy, while for $F\left(\varrho_{AB}^k\right) > F_\mathrm{bnd}$ this even holds for higher values of $\eta$. 
\end{proposition}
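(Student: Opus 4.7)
The plan is to form the difference $r_\mathrm{proc,sym}^\prime - r_\mathrm{proc,asym}^\prime$ and determine when it is nonnegative. First I would dispose of the trivial regime: if $F(\varrho_{AB}^k) < F_\mathrm{th} \approx 0.8107$ then, by Proposition \ref{prop:r_key_werner}, $r_\mathrm{key}^\prime(\varrho_{AB}^k) \leq 0$, and a fortiori by Proposition \ref{prop:r_key_suboptimal} also $r_\mathrm{key}(\varrho_{AB}^k,\hat{A}_3,\hat{B}_3) \leq 0$, so both processing rates collapse to zero via the $\max\{0,\cdot\}$ clippings in (\ref{eq:r_proc}) and there is nothing to prove. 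I may thus restrict attention to $F \geq F_\mathrm{th}$.

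I would then split on the sign of the suboptimal rate using the threshold $F_\mathrm{bnd} \approx 0.895$ from Proposition \ref{prop:r_key_suboptimal}. For $F_\mathrm{th} \leq F \leq F_\mathrm{bnd}$ the second $\max$-term in (\ref{eq:proc_2_prime}) vanishes, and the difference collapses to
\begin{equation*}
r_\mathrm{proc,sym}^\prime - r_\mathrm{proc,asym}^\prime = \bigl(\tfrac{1}{2} - 3\sqrt{\eta} + 4\eta\bigr)\,r_\mathrm{key}^\prime(\varrho_{AB}^k).
\end{equation*}
Since $r_\mathrm{key}^\prime \geq 0$ in this regime, the sign is governed by the quadratic $4u^2 - 3u + \tfrac{1}{2}$ in $u := \sqrt{\eta}$, which has discriminant $1$ and roots $u \in \{1/4,\,1/2\}$, i.e.\ $\eta \in \{1/16,\,1/4\}$. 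Within the admissible range $\eta \in (0,\,1/4)$ the upward-opening parabola is nonnegative precisely for $\eta \leq 1/16 = 0.0625$, yielding the first half of the statement.

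For $F > F_\mathrm{bnd}$ the extra term $(2\sqrt{\eta} - 4\eta)\,r_\mathrm{key}(\varrho_{AB}^k,\hat{A}_3,\hat{B}_3)$ is strictly positive on $\eta \in (0,\,1/4)$ and contributes only positively to the difference. It both reinforces the inequality for $\eta \leq 0.0625$ and, crucially, can compensate for the coefficient $\tfrac{1}{2} - 3\sqrt{\eta} + 4\eta$ turning negative above $0.0625$, once $F$ is sufficiently large. The break-even $\eta^\star(F)$ is defined implicitly by
\begin{equation*}
\bigl(2\sqrt{\eta} - 4\eta\bigr)\,r_\mathrm{key}(\varrho_{AB}^k,\hat{A}_3,\hat{B}_3) = \bigl(3\sqrt{\eta} - 4\eta - \tfrac{1}{2}\bigr)\,r_\mathrm{key}^\prime(\varrho_{AB}^k),
\end{equation*}
which I would analyze numerically to exhibit $\eta^\star(F) > 1/16$ for every $F > F_\mathrm{bnd}$.

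The main obstacle is precisely this second half: the suboptimal rate $r_\mathrm{key}(\varrho_{AB}^k,\hat{A}_3,\hat{B}_3)$ arising from the cross-party CHSH-basis combinations has no clean closed form in $F$, because via (\ref{eq:p_AB}) and (\ref{eq:r_key_reduced}) it involves $H(A,B)$ expressed as a sum of $p\,\mathrm{ld}\,p$ terms with arguments that are affine in $F$. A fully analytic $F$-dependent critical curve is therefore not accessible, and the ``holds for higher values of $\eta$'' clause is most cleanly substantiated by a numerical sweep over $(F,\eta)$ in the admissible window, matching the qualitative phrasing of the proposition. The first half, by contrast, reduces to the elementary quadratic analysis above.
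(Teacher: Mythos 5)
Your proposal is correct and follows essentially the same route as the paper's proof: compare the two optimal processing rates, observe that below $F_\mathrm{bnd}$ only the optimal-basis term survives so the comparison reduces to the sifting coefficients with crossover at $\eta = 1/16$, and note that above $F_\mathrm{bnd}$ the symmetric strategy's additional positive term $(2\sqrt{\eta}-4\eta)\,r_\mathrm{key}(\varrho_{AB}^k,\hat{A}_3,\hat{B}_3)$ pushes the break-even $\eta$ upward. You are in fact more explicit than the paper, which merely asserts the $\eta = 0.0625$ crossover rather than deriving it from the quadratic $4u^2 - 3u + \tfrac{1}{2}$ in $u=\sqrt{\eta}$, and you correctly flag that the ``higher values of $\eta$'' clause ultimately rests on a numerical sweep, just as in the paper.
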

\begin{proof}
    For $F\left(\varrho_{AB}^k\right) \leq F_\mathrm{bnd}$, both processing strategies only rely on measurements within identical and therefore optimal measurement bases as shown by Proposition \ref{prop:r_key_werner}. Thus, the only difference between the two processing strategies in this case is the sifting rate, i.e., the probability with which these optimal measurement bases are chosen by A and B when measuring the state $\varrho_{AB}^k$. For the asymmetric strategy of (\ref{eq:r_proc_1_prime}), the sifting rate is $\frac{1}{2}-2\eta$, whereas for the symmetric strategy of (\ref{eq:proc_2_prime}), it is $1-3\sqrt{\eta}+2\eta$. Consequently, the two sifting rates, and thus the processing rates, are identical at $\eta = 0.0625$, whereas the symmetric processing strategy dominates for any $\eta < 0.0625$.

    For the cases that $F\left(\varrho_{AB}^k\right) > F_\mathrm{bnd}$, the symmetric strategy additionally is able to process suboptimal measurements as presented within Proposition \ref{prop:r_key_suboptimal}, further increasing its processing rate. Thus, with increasing fidelity $F\left(\varrho_{AB}^k\right)$ beyond the boundary fidelity $F_\mathrm{bnd}$, this strategy becomes increasingly more efficient, which in turn increases the value for $\eta$ where the asymmetric strategy is outperforming the symmetric one. 
\end{proof}

\begin{figure}[!t]
\centering
\includegraphics[width=3in]{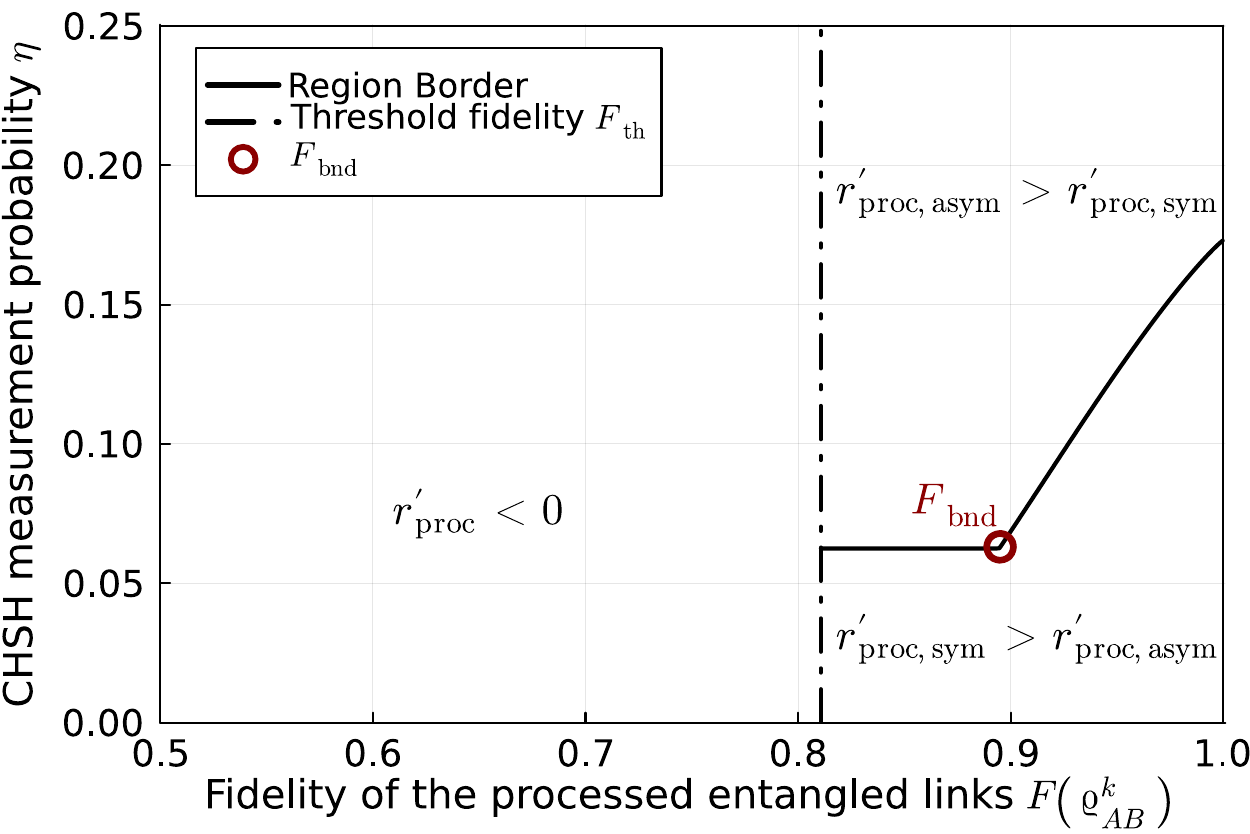}
\caption{Region map comparing the two processing strategies, dependent on the values for $\eta$ and $F(\varrho_{AB}^k)$, and for the assumption that $\varrho_{AB}^k \in \mathcal{D}_\mathrm{W}(\mathcal{H}_{AB})$.} 
\label{fig:Comparison}
\end{figure}

Figure \ref{fig:Comparison} shows the areas, dependent on the values of $\eta$ and $F\left(\varrho_{AB}^k\right)$, where each of the two discussed processing strategies performs best. Additionally, it shows the threshold fidelity $f_\mathrm{th}$ which has been determined in Section \ref{CH:Optimal_key_rate}. With these results, we are not able to fully answer \emph{RQ2} as we did not solve the optimization problem of (\ref{eq:optimal_r_proc}), but we presented a new processing strategy that for certain parameter combinations of $\eta$ and $F\left(\varrho_{AB}^k\right)$ outperforms the common strategy found within the literature. 

\section{Optimal Amount of Entanglement Distillation} \label{CH:Optimal_k}
The final remaining research question is \emph{RQ1}, i.e., finding the optimal amount of entanglement distillation iterations
\begin{equation}\label{eq:k_opt}
    k_\mathrm{opt} =  \argmax_{k\in \mathbb N_0} r_\mathrm{ent}\left(\varrho_{AB}^0,k\right) r_\mathrm{proc}(\varrho_{AB}^k,\mathcal{B}_{A3},\mathcal{B}_{B3},\eta,\gamma),
\end{equation}
for any given processing strategy and entanglement distillation protocol. This results in a trade-off, dependent on $k$, between the quantity and quality of the entangled links. On the one side, with increasing $k$ the entanglement distillation rate will reduce, as each further iteration will require more and more resources resulting in less remaining entangled links. On the other hand, each iteration will increase the quality of the entangled link being processed, resulting in an increasing key distillation rate $r_\mathrm{key}$. For the processing rate, the behavior is not directly clear, as it depends on whether $\eta$ changes depending on $k$ or not. Based on the implementation, $\eta$ could either be defined as a fraction of all $N$ measured states, resulting in $\eta$ being independent of $k$, or as a fixed amount of measurements. In the latter case, $\eta$ would change dependent on $k$, according to $\eta= \frac{\eta_0}{r_\mathrm{ent}\left(\varrho_{AB}^0,k\right)}$ with $\eta_0$ being the desired value of $\eta$ for $k=0$. 

Independent of the influence that entanglement distillation has on the sifting rate and also independent of the exact implementation of the entanglement distillation procedure, and therefore also of the protocols efficiency, the optimal amount of distillation iterations $k_\mathrm{opt}$ can be found numerically, based on the idea in \cite{Ettinger2025}, by confining the search area of $k$. For a more streamlined notation, we do so by omitting the dependency on the variables $\varrho_{AB}^0$, $\mathcal{B}_{A3}$, $\mathcal{B}_{B3}$, $\eta$, and $\gamma$ by expressing the final secure key rate only dependent on $k$, as 
\begin{equation}\label{eq:r(k)}
    r(k) = r_\mathrm{ent}(k) r_\mathrm{proc}(k),
\end{equation}
and the entanglement distillation success probability, cf. Equation (\ref{eq:r_ent(k)}), for the state $\varrho_{AB}^k$ as $p_\mathrm{ent}(k)$. 

\begin{proposition}\label{prop:k_opt}
    After numerically finding the first (potentially local) maximum 
    \begin{equation}\label{eq:k_loc}
        k_\mathrm{loc} = \min\{k\ |\ r(k+1) < r(k)\},
    \end{equation}
    the global maximum $k_\mathrm{opt}$, which is the solution to the problem of (\ref{eq:k_opt}), is either 
    \begin{equation}\label{eq:k_opt_range}
        k_\mathrm{opt} = k_\mathrm{loc} \ \ \text{ or }\ \   k_\mathrm{loc}+1 < k_\mathrm{opt} \leq k_\mathrm{loc} + \kappa,
    \end{equation}
    which means that for $\kappa = 1$ it holds that $k_\mathrm{opt} = k_\mathrm{loc}$. The bounding parameter $\kappa \in \mathbb{N}$ is determined by 
    \begin{equation}\label{eq:kappa}
        \kappa = \min\{\kappa_1,\kappa_2\}-1,
    \end{equation}
    with   
    \begin{equation}\label{eq:kappa_1}
        \kappa_1 = \min\left\{\nu \ |\ \frac{2^\nu r_\mathrm{proc}(k_\mathrm{loc})}{\prod_{x=0}^{\nu-1}p_\mathrm{ent}(k_\mathrm{loc}+x)} > 1-4\eta\right\}
    \end{equation}
    and 
    \begin{equation}\label{eq:kappa_2}
        \kappa_2 = \min\left\{\nu\ |\ r_\mathrm{ent}(k_\mathrm{loc}+ \nu )< \frac{r(k_\mathrm{loc})}{1-4\eta}\right\}.
    \end{equation}

\begin{proof}
    The range in which $k_\mathrm{opt}$ needs to be found, given by (\ref{eq:k_opt_range}), directly follows from the definition of $k_\mathrm{loc}$ in (\ref{eq:k_loc}). 
    
    For the case that $k_\mathrm{opt} \neq k_\mathrm{loc}$, it follows that there exists values for $\nu \in \mathbb{N}$ in which 
    \begin{equation}\label{eq:r_(k+nu)}
        r(k_\mathrm{loc}+ \nu) > r(k_\mathrm{loc}).
    \end{equation}
    Lower bounding the value of $\nu$ for which the inequality (\ref{eq:r_(k+nu)}) can't be true anymore, proves the range of $k_\mathrm{opt}$ of (\ref{eq:k_opt_range}). The first of these bounds, $\kappa_1$ of Equation (\ref{eq:kappa_1}), is proven by first using the definition of the entanglement distillation rate as given in (\ref{eq:r_ent(k)}) to show that 
     \begin{equation}
        r_\mathrm{ent}(k_\mathrm{loc}+\nu) = r_\mathrm{ent}(k_\mathrm{loc})\prod_{x=0}^{\nu-1}\frac{p_\mathrm{ent}(k_\mathrm{loc}+x)}{2}.
    \end{equation}
    Together with (\ref{eq:r(k)}), we rewrite the inequality (\ref{eq:r_(k+nu)}) into
    \begin{equation}
        r_\mathrm{proc}(k_\mathrm{loc}+\nu) \prod_{x=0}^{\nu-1}\frac{p_\mathrm{ent}(k_\mathrm{loc}+x)}{2} > r_\mathrm{proc}(k_\mathrm{loc}),
    \end{equation}
    where we eliminated the term $r_\mathrm{ent}(k_\mathrm{loc})$ from both sides. With the processing rate being limited to $r_\mathrm{proc}(k_\mathrm{loc}+\nu) \leq 1-4\eta$, due to the fact that we have at most a sifting rate of $1-4\eta$ (see Subsection \ref{CH:r_sift}), we can further reformulate this statement as  
    \begin{equation}\label{eq:kappa_1_condition}
        1-4\eta > \frac{2^\nu r_\mathrm{proc}(k_\mathrm{loc})}{\prod_{x=0}^{\nu-1}p_\mathrm{ent}(k_\mathrm{loc}+x)},
    \end{equation}
    being a necessary condition for $\nu$ so that (\ref{eq:r_(k+nu)}) is true. This proves that the lower bound $\kappa_1$ of (\ref{eq:kappa_1}) constrains the solution space of $k_\mathrm{opt}$ as given in the combination of (\ref{eq:k_opt_range}) and (\ref{eq:kappa}).   

    Using again the fact that $r_\mathrm{proc}(k_\mathrm{loc}+\nu) \leq 1-4\eta $, we reformulate the expression of (\ref{eq:r(k)}) as the inequality 
    \begin{equation}
        r(k_\mathrm{loc}+\nu) \leq (1-4\eta)\, r_\mathrm{ent}(k_\mathrm{loc}+\nu).
    \end{equation}
    By combining this inequality with the proposed bound of $\kappa_2$ as given by (\ref{eq:kappa_2}), it follows that any $\nu \geq \kappa_2$ results in a contradiction with the statement of (\ref{eq:r_(k+nu)}). Thus, proving that $\kappa_2$ can also be used to restrict the solution space of $k_\mathrm{opt}$.

\end{proof}
\end{proposition}

As $k_\mathrm{loc}$ exists for any distillable state $\varrho_{AB}^0$, we use Proposition \ref{prop:k_opt} to answer \emph{RQ1} independently of the chosen distillation protocol or processing strategy. We can guarantee to numerically find the optimal amount of distillation iterations $k_\mathrm{opt}$ by beginning with $k=0$ and successively incrementing the value until we first find a local optimum $k_\mathrm{loc}$ and then continue until we reach one of the bounding conditions. Due to the recursive nature of calculating the effects of entanglement distillation (cf. (\ref{eq:r_ent(k)}) or e.g. \cite{Duer2007}), this numerical approach of computing $k_\mathrm{opt}$ has a computational complexity comparable to an analytic solution for finding $k_\mathrm{opt}$, as all distillation iterations $d\leq k_\mathrm{opt}$ would also need to be computed in this case.

\section{Consideration of Post Distillation Shape}\label{CH:Post_Distillation}
As stated within Subsection \ref{CH:r_ent}, post distillation states $\varrho_{AB}^k \ \forall k \geq 1$, are generally in the form of Bell diagonal states as defined in (\ref{eq:Bell_Diagonal}). In \cite{Kokorsch2025}, it has been shown that this is generally a more beneficial type of state than Werner states. Since any arbitrary state can always be transformed into the shape of a Werner state without changing the state's fidelity \cite{Bennett1996b}, the results obtained in Section \ref{CH:Optimal_key_rate} act as a lower bound for states in other shapes. In the following, we want to briefly discuss what changes if we don't make the simplified assumption that the processed state is a Werner state, i.e., $\varrho_{AB}^k \in \mathcal{D}_\mathrm{W}(\mathcal{H}_{AB}) \ \forall k \in \mathbb{N}_0$. Starting with \emph{RQ1}, the results of Section \ref{CH:Optimal_key_rate} are only partially true for Bell diagonal states. We can still reduce the expression of the key distillation rate of (\ref{eq:Devetak_Winter}) to (\ref{eq:r_key_reduced}), but the result for the optimal measurement basis of (\ref{eq:prop_optimal_basis}) in Proposition \ref{prop:r_key_werner} do not apply anymore. This can be simply shown by computing the resulting key distillation rate for the case that both A and B measure either within the basis $\mathcal{B}_{A1}$ or $\mathcal{B}_{A2}$ which will, contrary to the case where $\varrho_{AB}^k \in \mathcal{D}_\mathrm{W}(\mathcal{H}_{AB})$, not result in the same key distillation rate. While we did not solve the optimization problem of (\ref{eq:r_key_opt}) for the case that $\varrho_{AB}^k \in \mathcal{D}_\mathrm{B}(\mathcal{H}_{AB})$, we propose the following partial result, proving parts of the conjecture stated in \cite{Kokorsch2025}. 

\begin{proposition}\label{prop:r_key_bell}
For the case  that $\varrho_{AB}^k \in \mathcal{D}_\mathrm{B}(\mathcal{H}_{AB})$ and if A and B both choose the measurement basis $\mathcal{B}_{A1} = \{\ket{0},\ket{1}\}$, the resulting key distillation rate is identical to the optimum for the case that $\varrho_{AB}^k \in \mathcal{D}_\mathrm{W}(\mathcal{H}_{AB})$ as given in (\ref{eq:r_key_werner_alt}), i.e.,
\begin{equation}\label{eq:r_key}
    r_\mathrm{key} (\varrho) = 1- S(\varrho).
\end{equation}

\begin{proof}
    We start with (\ref{eq:r_key_reduced}) which also holds for this case. Using the same concept as in the proof of Proposition \ref{prop:r_key_werner}, the joint classical entropy $H(A,B)$ can be computed to be 
    \begin{equation}
        H(A,B) = -(F+\tau)\ld\left(\frac{F+\tau}{2}\right) - (\epsilon+\delta)\ld\left(\frac{\epsilon+\delta}{2}\right).
    \end{equation}
    Again, based on the same argumentation as in the proof of Proposition \ref{prop:r_key_werner}, we compute the von Neumann entropy of the reduced post measurement state as 
    \begin{equation}
       S(\varrho_{B}^\alpha) = -(F+\tau)\ld(F+\tau) - (\epsilon+\delta) \ld(\epsilon+\delta).
    \end{equation}
    From this, it follows that 
    \begin{equation}
        S(\varrho_{B}^\alpha) -H(A,B) = -1
    \end{equation}
    which directly proves the statement in (\ref{eq:r_key}).
\end{proof}
\end{proposition}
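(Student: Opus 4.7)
The plan is to start from the simplified key-rate formula (\ref{eq:r_key_reduced}), which the excerpt already established for every Bell diagonal state, and show that the combination $2 - H(A,B) + S(\varrho_B^\alpha)$ collapses to $1$ when both parties measure in the computational basis. Once this cancellation is in hand, substituting back leaves exactly $1 - S(\varrho_{AB})$, which is the claim.

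Concretely, I would first compute the joint outcome distribution. Using $\ket{\Phi^\pm} = (\ket{00}\pm\ket{11})/\sqrt{2}$ and $\ket{\Psi^\pm} = (\ket{01}\pm\ket{10})/\sqrt{2}$, the formula (\ref{eq:p(a,b)}) applied to the Bell diagonal decomposition (\ref{eq:Bell_Diagonal}) yields
\begin{equation*}
p_{A,B}(0,0) = p_{A,B}(1,1) = \tfrac{F+\tau}{2}, \qquad p_{A,B}(0,1) = p_{A,B}(1,0) = \tfrac{\epsilon+\delta}{2}.
\end{equation*}
Plugging into (\ref{eq:H(A,B)}) and using the identity $-p\,\ld(p/2) = -p\,\ld p + p$ together with $(F+\tau)+(\epsilon+\delta)=1$ gives
\begin{equation*}
H(A,B) = -(F+\tau)\,\ld(F+\tau) - (\epsilon+\delta)\,\ld(\epsilon+\delta) + 1.
\end{equation*}

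Next, I would evaluate $\varrho_B^{\alpha}$ for $\alpha = 0$ (by Proposition~1 the result does not depend on the branch). Projecting with $\ket{0}\bra{0}\otimes I$ sends $\ket{\Phi^\pm}$ to $\ket{00}/\sqrt{2}$ and $\ket{\Psi^\pm}$ to $\ket{01}/\sqrt{2}$; tracing out $A$ and dividing by $p_A(0)=1/2$ gives
\begin{equation*}
\varrho_B^{0} = (F+\tau)\ket{0}\bra{0} + (\epsilon+\delta)\ket{1}\bra{1},
\end{equation*}
whose spectrum leads to $S(\varrho_B^{0}) = -(F+\tau)\,\ld(F+\tau) - (\epsilon+\delta)\,\ld(\epsilon+\delta)$. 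The two quantities therefore differ by exactly the constant $1$, i.e.\ $H(A,B) - S(\varrho_B^{\alpha}) = 1$, and substituting this into (\ref{eq:r_key_reduced}) eliminates all dependence on $F,\delta,\epsilon,\tau$ except through $S(\varrho_{AB})$, yielding (\ref{eq:r_key}).

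The proof is almost entirely bookkeeping; there is no deep obstacle. The only place one has to be a bit careful is to observe that the same pairing $\{F,\tau\}$ versus $\{\epsilon,\delta\}$ appears in both $H(A,B)$ and $S(\varrho_B^{\alpha})$, which is precisely what makes the non-trivial logarithmic terms cancel. That pairing reflects the fact that $\ket{\Phi^\pm}$ are supported on the "aligned" computational strings and $\ket{\Psi^\pm}$ on the "anti-aligned" ones, so the computational basis is blind to the relative phase inside each pair; this is the structural reason the same identity that holds for Werner states survives the generalisation to Bell diagonal states for this particular measurement choice.
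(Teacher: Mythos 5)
Your proposal is correct and follows essentially the same route as the paper's own proof: reduce via (\ref{eq:r_key_reduced}), compute $H(A,B)$ and $S(\varrho_B^\alpha)$ in the computational basis, and observe that they differ by exactly $1$. You simply make the intermediate outcome probabilities and the post-measurement state explicit, which the paper leaves implicit.
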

The fact that this solution is identical to the optimal solution for Werner states as given by (\ref{eq:r_key_werner_alt}), indicates that this is also the optimal solution for the more general case of an arbitrary Bell diagonal state, which partly answers \emph{RQ3}. But even then, it remains unanswered if this is the only combination of measurement bases for which the optimal key distillation rate can be achieved, or if there exist others.
With regard to the processing strategy, i.e., \emph{RQ2}, the situation is similar. The results of Proposition \ref{prop:waste} can be verified to also hold for Bell diagonal states, following the same steps as for the original proof, while the same is not possible for Proposition \ref{prop:r_key_suboptimal}. Thus, no clear statement about the effectiveness of the processing strategies can be made when dealing with Bell diagonal states. Lastly, the question of the optimal amount of distillation iterations stated within \emph{RQ1} can still be fully answered by the results of Proposition \ref{prop:k_opt}, as at no point the assumption that $\varrho_{AB}^k \in \mathcal{D}_\mathrm{W}(\mathcal{H}_{AB})$ was made within its proof.

\section{Conclusion and Outlook}\label{CH:Conclusion}
Within the presented paper, we investigated and discussed the whole process chain necessary for generating a secure key with an entanglement based \ac{QKD} protocol that utilizes the Bell inequality to guarantee its security. To do so, we made the worst-case assumption that the attacker E has full control over the quantum channel and began with confining our investigations to Werner states. Starting from the distribution of the entangled states between A and B, we investigated how to determine the optimal amount of preprocessing in the form of a recurrence based entanglement distillation protocol (\emph{RQ1}). For this we proposed and proved a way of upper bounding the optimal amount of iterations, allowing to quickly find the optimal solution via a numerical search for any given system. Further, we investigated how to process these entangled links, by proposing a new processing strategy and showing that it is better than the common strategy from within the literature for certain scenarios (\emph{RQ2}). Additionally, we answered \emph{RQ3}, i.e, which measurement bases need to be chosen by A and B, to achieve the key distillation capacity introduced by Devetak and Winter, and proved this result. Beyond that, we also briefly discussed how these results change if we don't assume Werner states but model the post-distillation states as arbitrary Bell diagonal states. 

Based on these results, various avenues for future research can be considered. For one, it would be of great interest to actually find and prove what the optimal processing strategy is, i.e., solving the proposed optimization problem of (\ref{eq:optimal_r_proc}). In addition, the influence that the states shape has on the results should be further investigated, involving the optimal measurement bases for, e.g., Bell diagonal states.

\bibliographystyle{IEEEtran}
\bibliography{references}

\end{document}